\newtheorem{theorem}{Theorem}
\newtheorem{lemma}{Lemma}
\renewcommand{\maketag@@@}[1]{\hbox{\m@th\normalsize\normalfont#1}}
\begin{document}

\title{Non-Orthogonal Multiple-Access for Coherent-State Optical Quantum Communications Under Lossy Photon Channels}

\author{Zhichao Dong, Xiaolin Zhou, \IEEEmembership{Senior Member, IEEE,} Yongkang Chen,  \\
Wei Ni, \IEEEmembership{Fellow, IEEE,} Ekram Hossain, \IEEEmembership{Fellow, IEEE,} and Xin Wang, \IEEEmembership{Fellow, IEEE}
\thanks{Z. Dong, X. Zhou, and X. Wang are with the Key Laboratory for Information Science of Electromagnetic Waves, School of Information Science and Technology, Fudan University, Shanghai 200433, China (e-mail: zhichaodong@ynu.edu.cn; zhouxiaolin@fudan.edu.cn; xwang11@fudan.edu.cn)}
\thanks{Y. Chen is with the College of Information Engineering, Shanghai Maritime University, Shanghai 201306, China (e-mail: ykchen@shmtu.edu.cn).}
\thanks{W. Ni is with Data61, Commonwealth Scientific and Industrial Research Organisation, Sydney, NSW 2122, Australia, and the School of Computer Science and Engineering, University of New South Wales, Sydney, NSW 2052, Australia (e-mail: wei.ni@ieee.org).}
\thanks{Ekram Hossain is with the Department of Electrical and Computer Engineering, University of Manitoba, Winnipeg, MB R3T 2N2, Canada (e-mail:
ekram.hossain@umanitoba.ca).}
}

\IEEEpubid{}

\maketitle

\begin{abstract}
Coherent states have been increasingly considered in optical quantum communications (OQCs). With the inherent non-orthogonality of coherent states, non-orthogonal multiple-access (NOMA) naturally lends itself to the implementation of multi-user OQC. However, this remains unexplored in the literature.
This paper proposes a novel successive interference cancellation (SIC)-based photon-number-resolving detection (PNRD)-Kennedy receiver for uplink NOMA-OQC systems, along with a new approach for power allocation of the coherent states emitted by users.
The key idea is to rigorously derive the asymptotic sum-rate of the considered systems, taking into account the impact of atmospheric turbulence, background noise, and lossy photon channel. 
With the asymptotic sum-rate, we optimize the average number of photons (or powers) of the coherent states emitted by the users. Variable substitution and successive convex approximation (SCA) are employed to convexify and maximize the asymptotic sum-rate iteratively. 
A new coherent-state power allocation algorithm is developed for a small-to-medium number of users. 
We further develop its low-complexity variant using adaptive importance sampling, which is suitable for scenarios with a medium-to-large number of users. Simulations demonstrate that our algorithms significantly enhance the sum-rate of uplink NOMA-OQC systems using coherent states by over 20\%, compared to their alternatives.
\end{abstract}

\begin{IEEEkeywords}
Optical quantum communication, coherent states, non-orthogonal multiple access, lossy photon channels, PNRD-Kennedy receiver, power allocation.
\end{IEEEkeywords}

\section{Introduction}
\IEEEPARstart{Q}{uantum} communication has been increasingly considered for future communications due to its high transmission data rate and reliability \cite{Survey-zhou,Quantum-mechanical,Quantum_Limits}. Coherent states (Glauber states) are a special type of quantum state in quantum optics, characterized by minimal fluctuations in the number of photons.
The output of practical laser sources can be accurately modeled as coherent states, enabling compatibility with classical optical communication systems.
Compared to non-coherent states, coherent states are advantageous for practical implementations, offering stable properties with well-defined amplitudes and phases \cite{Glauber}. Their compatibility with classical systems and the maturity of laser technology have established coherent states as ideal information carriers in optical quantum communication (OQC) systems \cite{FSO-nature, Quantum_Communication,Multiaccess,Diversity}. The reason is that single-antenna configurations remain predominant in OQC systems due to their inherent advantages in preserving quantum state integrity~\cite{background-noise} and system simplicity \cite{Threshold}. The direct alignment between single-antenna outputs and quantum receivers also eliminates multi-path interference and enables efficient quantum state detection—an essential requirement for OQC systems~\cite{FSO-nature}. By contrast, deployment of multi-antenna systems still faces significant challenges, including reduced detection efficiency, lower state fidelity, and increased receiver design complexity.

The study in \cite{FSO-turbulence} demonstrated that increasing atmospheric turbulence significantly raises signal attenuation and quantum bit error rate (QBER), with the secure key rate dropping to zero when the Rytov variance exceeds~1.3. In a real-world satellite-to-ground experiment, the authors of \cite{FSO-nature} observed turbulence-induced arrival time jitter and mode distortion, requiring adaptive optics and strict filtering to maintain secure operation. It was confirmed that higher background noise levels during daytime would exceed the QBER security threshold, leading to key distribution failure. The authors of~\cite{background-noise} showed that solar scattering during daytime severely degrades key rates, while nighttime operation improved performance by two orders of magnitude. 
Moreover, the study in \cite{Nature2} quantitatively analyzed photon loss, demonstrating that it directly increases QBER and constrains system performance, particularly under extremely weak signal conditions.

Coherent states possess inherent non-orthogonal properties, making non-orthogonal multiple access (NOMA) a natural approach for implementing multi-user OQC to enhance system capacity. However, NOMA-OQC systems employing coherent states remain unexplored in the literature and present significant research challenges~\cite{Multiaccess, Multiple-Access}. In practical OQCs,
coherent states are affected by atmospheric turbulence~\cite{FSO-turbulence,FSO-nature}, background noise~\cite{background-noise}, and photon loss during detection~\cite{Nature2}. These practical constraints create unique challenges in the allocation of the powers of the coherent states emitted by the users in NOMA-OQC systems. Moreover, the existing techniques developed for classical communication systems are not applicable due to critical distinctions between OQC systems employing coherent states and classical additive white Gaussian noise (AWGN) systems; see Table~\ref{tab:Comparison_AWGN}.

\subsection{Related Work}
In the context of multiple-access OQC systems, Concha \textit{et al.}\cite{Multiaccess} introduced a method for quantum multiple-access using coherent states in both single-mode and multi-mode scenarios. Chou \cite{Multiple-Access} investigated the capacity region of quantum multiple-access channels, while the authors of \cite{CDMA} constructed a quantum code-division multiple access (CDMA) communication system based on quantum multiple-access technology. Moreover, advanced channel coding schemes, such as Low-Density Parity-Check (LDPC) codes \cite{LDPC},  polar codes \cite{Polar-code}, and turbo codes \cite{Turbo-codes}, have demonstrated their effectiveness in mitigating error propagation in the classical NOMA systems. This provides a theoretical foundation for SIC-based multi-user detection in NOMA-OQC systems.
Furthermore, the authors of \cite{Weng} proposed a non-orthogonal quantum multi-user iterative detection scheme, where users emit coherent states and effective interference suppression is demonstrated.

On the other hand, in OQC systems, the loss of photons in the coherent states can be modeled by a lossy photon channel \cite{lossy,lossy_photon2}. Łukanowski and Jarzyna \cite{lossy_photon2}
calculated the capacity of a lossy photon channel based on the average number of photons detected in coherent states using photon-number-resolving detection (PNRD) \cite{lossy}. Building on this, a weak-field homodyne receiver with finite-resolution PNRDs was later developed as a superior practical alternative to conventional homodyne and direct detection schemes for coherent states transmitted over lossy channels \cite{weak-field}. PNRD-based detection significantly improves quantum state reception in lossy channels due to its quantum-limited photon-number sensitivity (following Poisson statistics) and enhanced communication performance with significant input signals~\cite{PNRD_2,Threshold}. As established in \cite{Finite-PNRD1,Finite-PNRD2}, PNRD-based quantum receivers achieve a performance advantage over both the standard quantum limit (SQL) and classical Kennedy receivers, even under practical constraints such as finite resolution and non-ideal detection efficiency. In this work, we refer to a quantum receiver architecture combining displacement operation with PNRD as the PNRD-Kennedy receiver.

Using coherent states as information carriers in OQC systems has received growing attention~\cite{detection,Threshold,Diversity,Performance,Binary}. Taking background noise into consideration, Semenov \textit{et al.}~\cite{detection} established a signal-dependent noise model to characterize the influence of background noise. Considering atmospheric turbulence, Yuan \textit{et al.} \cite{Diversity} proposed an optical combining technique for receiving coherent states in OQC systems. In another study, Yuan \textit{et al.} \cite{Threshold} analytically studied the performance of the Kennedy receiver and proposed the optimally displaced threshold detection under noises. Vázquez-Castro \textit{et al.}~\cite{Binary} verified the performance advantages of BPSK modulation of coherent states in OQC systems compared to classical communication systems under turbulent channels.

These existing studies, i.e., \cite{Threshold,Diversity,Performance,Binary,lossy},
have focused primarily on studying the performance of OQCs, including optimal threshold detection, quantum bit error rate, and system capacity. There is a lack of studies on the power allocation of coherent states in multi-user OQC systems over a lossy photon channel. Let alone NOMA-OQC.
\begin{table}
  \centering
  \renewcommand{\arraystretch}{1}
  \caption{Comparison between OQC  systems employing coherent states and classical AWGN systems }
  \label{tab:Comparison_AWGN}
  \begin{tabular}{|p{0.15\linewidth}|p{0.3\linewidth}|p{0.3\linewidth}|c|}
    \hline
     & \textbf{Coherent-state OQC systems} & \textbf{Classical AWGN systems}\\
    \hline
    \textbf{Information carrier}  & Coherent states & Radio frequency electromagnetic waves\\
    \hline
     \textbf{Noise} & Signal-dependent noise  \cite{detection}& Additive white Gaussian noise\\
     \hline
     \textbf{Received signal} & Poisson-distributed \cite{lossy} & Gaussian-distributed\\
     \hline
     \textbf{Sum-rate calculation} & Based on the definition of mutual information & Based on Shannon formula\\
     \hline
  \end{tabular}
\end{table}
\subsection{Contribution}
In this paper, we develop sum-rate analysis and propose a novel communication resource allocation approach for uplink NOMA-OQC systems with coherent-state users based on a successive interference cancellation (SIC)-enhanced PNRD-Kennedy receiver. The contributions are as follows.  

\begin{itemize}
\item We propose a novel SIC-based PNRD-Kennedy receiver for uplink NOMA-OQC systems that explicitly accounts for the impact of atmospheric turbulence, background noise, and photon loss.  

\item We analytically derive the sum-rate of the uplink NOMA-OQC systems.
We establish the upper and lower bounds of the sum-rate, prove their asymptotic tightness, and obtain an asymptotic approximation of the sum-rate.

\item We design a new coherent-state power allocation algorithm by iteratively convexifying the asymptotic sum-rate using successive convex approximation (SCA).
The algorithm converges to a quality solution satisfying the Karush-Kuhn-Tucker (KKT) conditions.

\item We also develop a low-complexity coherent-state power allocation algorithm using adaptive importance sampling.
Polynomial complexity is achieved, while the KKT conditions are preserved upon convergence. This algorithm is suited under a medium-to-large number of users. 
\end{itemize}

As corroborated by extensive simulations, the asymptotic sum-rate is tight, and the proposed coherent-state power allocation algorithms can substantially outperform their possible alternatives by 20\% in the achievable sum-rate. The proposed low-complexity algorithm can alleviate the computational demands and, hence, substantially enhance the scalability in the considered NOMA-OQC systems.

The rest of this paper is organized as follows.  Section~\ref{sec:SysModel} establishes the system model, Section~\ref{sec:Analysis} develops the sum-rate analysis, Section~\ref{sec:Algorithm} proposes the power allocation scheme, and Section~\ref{sec:Low-Complexity} proposes its low-complexity implementation. Simulation results are presented in Section~\ref{sec:Simulation}, followed by conclusions in Section~\ref{sec:Conclusion}.

\begin{table}[t]
  \centering
  \renewcommand{\arraystretch}{1}
  \caption{Notation and Definitions}
  \label{tab:notations}
  \begin{tabular}{cp{0.7\linewidth}}
    \hline
    \textbf{Notation} & \textbf{Description} \\
    \hline
    $X_k$ & BPSK signal of the $k$-th user\\
    $n _b$  & Background noise \\
$\left\{ \hat{\Pi}_y \right\} _{y=0}^{\infty}$  & POVM measurement operators \\
$\eta$  & Transmittance coefficient of the lossy photon channel \\
$\tau_k$ & The power allocation factor of user $k$\\
$\phi_k$  & The amplitude of received coherent state of user $k$ \\
$h_k$  &  Fading coefficient \\
$h_{k,t}$  & Transmissivity of the atmospheric turbulence \\
$h_{k,l}$  &  Path loss \\
$\boldsymbol{X}_i$ & Enumerate all $2^K$ possibilities of $\boldsymbol{X}=\left[ X_1,\cdots,X_K \right] ^T$, $ i=1,\cdots, 2^K$ \\
$\varPhi_i$ & Enumerate all $2^K$ possibilities of $|\varPhi \rangle =|\sum_{k=1}^K{\phi _k}\rangle$, $ i=1,\cdots, 2^K$ \\
$\phi_{\max ,\mathrm{BS}}$  & The maximum effective received amplitude at the BS \\
$\phi_{\max ,k}$  & The maximum coherent-state amplitude of user~$k$\\
$\hat{D}\left( \boldsymbol{\gamma } \right)$  & Quantum displacement operator  \\
$\boldsymbol{W}_{\mathrm{sub}}$  & Sample weight set of Algorithm 2 \\
$\boldsymbol{\varPhi}_{\mathrm{sub}}$  & Sample set of Algorithm 2 \\
$\varXi \left( \boldsymbol{X} \right) $  & The proposal probability distribution of Algorithm 2 \\
$S$  & The sample size of Algorithm 2  \\
$\theta$ & The degrees of freedom of $t$-distribution\\
    \hline
  \end{tabular}
\end{table}
\section{System Model and Assumptions}
\label{sec:SysModel}

As illustrated in Fig.~\ref{fig:system}, we examine an uplink NOMA-OQC system with $K$ users. Let $ k=1,\cdots,K$ denote the indices of the users. Classical information from each user is mapped into $\boldsymbol{X}=\big[\ X_1,\cdots,X_K \big] ^T$, and transmitted via BPSK modulation using coherent states, where $X_k\in \mathbb{R}$ denotes the BPSK signal of the $k$-th user, and $\boldsymbol{X} \in \mathbb{R}$ collect the BPSK signals of all $K$ users. The coherent state emitted by a laser can be represented by a series of orthogonal bases $\{|m\rangle ,m=0,1,2,\cdots \}$ in an infinite-dimensional Hilbert space, as given by\cite{Glauber} 
\begin{equation}
\label{eq:alpha}
|\alpha\rangle =e^{-\frac{1}{2}|\alpha |^2}\sum_{m=0}^{\infty}{\frac{{\alpha }^{m}}{\sqrt{m!}}|m\rangle},
\end{equation}
    where $\alpha = |\alpha| e^{i\varphi } \in \mathbb{C}$ is a complex number, $|\alpha |^2$ is the average number of photons in the coherent state $|\alpha\rangle$, and $|m\rangle $ is the Fock state containing exactly $m$ photons. 
    If $ X_k=1$, then $\varphi=0$, and the coherent state emitted by user $k$ is $|\alpha _k\rangle =|\sqrt{\tau_k}\alpha \rangle$ with $\tau_k$ being the power allocation factor of user $k$ \cite{power1}. If $ X_k=-1$, then $\varphi=\pi$ and $|\alpha _k\rangle =|-\sqrt{\tau_k}\alpha \rangle$. 

Coherent states arrive at the receiver after undergoing atmospheric turbulence and are detected using a PNRD-Kennedy receiver. The photon loss during the detection process is modeled as a lossy photon channel \cite{lossy}. In the NOMA-based system supporting simultaneous transmissions of $K$ users, the receiver operates within a single mode of the Hilbert space for the reception of coherent states \cite{Multiaccess}. Consequently, individual signals from $K$ users undergo  coherent linear superposition, and can be physically realized using linear optical components such as beam splitters \cite{Diversity}. 
The coherent state at the PNRD-Kennedy receiver is $|\beta \rangle =\left|\sum_{k=1}^K{\sqrt{h_k}\alpha _k}\right\rangle $ \cite{Multiaccess,Diversity}. Here, $h_k=h_{k,t}h_{k,l}$ is the fading coefficient, $h_{k,t}$ is the transmissivity of the atmospheric turbulence, and $h_{k,l}$ the path loss.

With the PNRD-Kennedy receiver in Fig. 1, the received coherent state $|\beta\rangle$ is first displaced by a displacement operator $\hat{D}\big( \boldsymbol{\gamma } \big)$,  where $\boldsymbol{\gamma }$ is the displacement value. The displacement operation $\hat{D}\big( \boldsymbol{\gamma } \big)$ is achieved through a local oscillator (LO) and a beam splitter. Assuming a lossless beam splitter, then
\begin{equation}
\label{eq:displacement}
\hat{D}\big( \boldsymbol{\gamma } \big) |\beta \rangle =|\beta +\boldsymbol{\gamma }\rangle,
\end{equation}
where $\hat{D}\big( \boldsymbol{\gamma } \big) =e^{\big(\boldsymbol{\gamma }\hat{a}^{\dagger}-\boldsymbol{\gamma }^*\hat{a}\big)}$, with ${\hat{a}}^{\dagger}$ being the creation operator, $\hat{a}$ being the annihilation operator, and $\boldsymbol{\gamma }^*$ being the conjugate of $\boldsymbol{\gamma }$. 
Without loss of generality, we assume that $\boldsymbol{\gamma }=\sum_{k=1}^K{\gamma _k}$,  $\gamma _k=\sqrt{h_k\tau_k}\alpha$, and $\alpha$ is a real number \cite{Diversity}. 

Following the displacement operation, the quantum state is shifted from its original position $|\beta\rangle$ to $|\beta +\boldsymbol{\gamma }\rangle$ and can be detected by the PNRD.
For the brevity of notation and following the NOMA principle, we define the received coherent state as 
\begin{equation}
\label{eq:varPhi}
\begin{aligned}
|\varPhi \rangle =\Big|\sum_{k=1}^K{\phi _k}\Big\rangle =\Big|\sum_{k=1}^K{( \sqrt{h_k}\alpha _k+\gamma _k )}\Big\rangle,   
\end{aligned}
\end{equation}
where $|{\phi _k}\rangle =|{ \sqrt{h_k}\alpha _k+\gamma _k }\rangle$ denotes the received coherent state of user $k$. The physical interpretation of a coherent state $|{\phi _k}\rangle$ is the average number $\left| \phi_k \right|^2$ of photons, where $\phi_k$ represents the corresponding amplitude \cite{Quantum_Communication}. 

\begin{figure*}[htbp]
\centering
\includegraphics[width=0.75\textwidth]{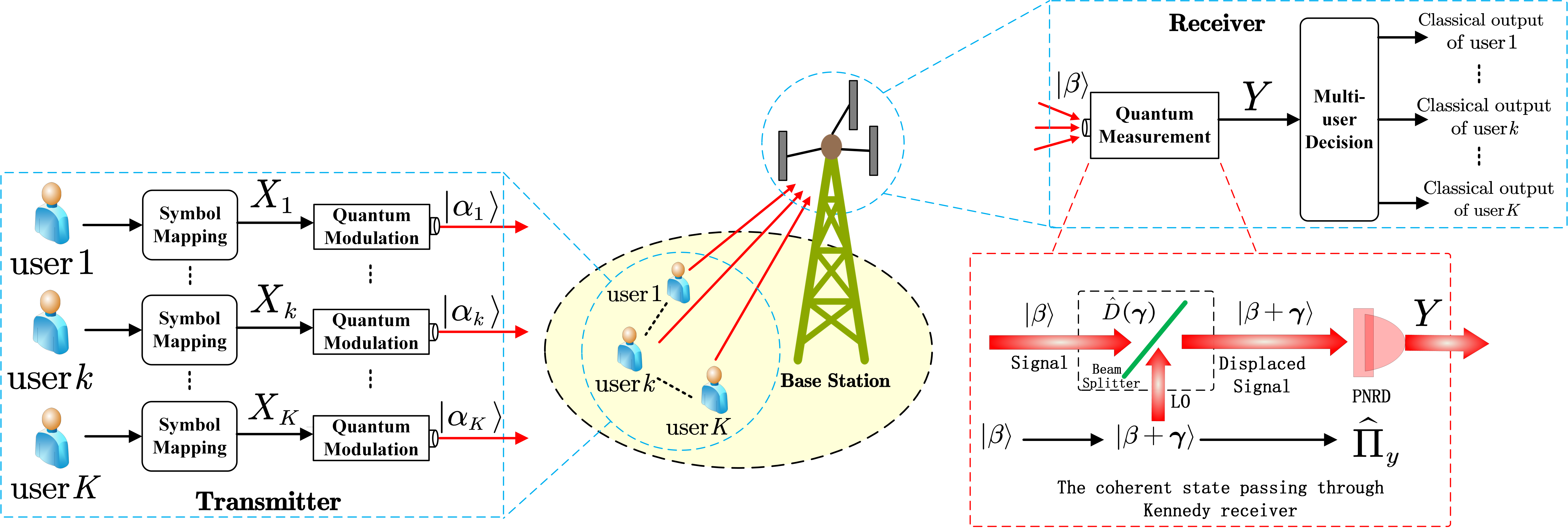}
\caption{ An illustration of the considered uplink NOMA-OQC system, where $K$ users form an uplink NOMA cluster, each user is equipped with a laser to emit coherent states, and the BS receives those coherent states through the PNRD-Kennedy receiver. Here, LO stands for local oscillator.}
\label{fig:system}
\end{figure*}

The received coherent state is measured using a PNRD characterized by a set of positive-operator valued measure (POVM) operators $\big\{ \hat{\Pi}_y \big\} _{y=0}^{\infty}$ \cite{Threshold}. The POVM measurement operator $\hat{\Pi}_y$ for $y$-photon detection can be expressed as~\cite{TPSK}
\begin{equation}
\label{eq:POVM}
\begin{aligned}
\hat{\Pi}_y= \sum_{\varepsilon =y}^{+\infty}&\binom{\varepsilon}{y}
\eta ^{y}\big( 1-\eta \big) ^{\varepsilon -y}|\varepsilon \rangle \langle \varepsilon |,
\end{aligned}
\end{equation}
where  $\eta\in \big[ 0,1 \big] $ is the transmission coefficient of the lossy photon channel under consideration \cite{lossy}.
When the background noise is negligible, the probability of detecting $y$ photons in the coherent state $|\varPhi \rangle $ is~\cite{lossy,TPSK} 
\begin{equation}
    \label{eq:basePr_group_ideal}
    \begin{aligned}
    \mathrm{Pr}\big( y\mid\varPhi \big)=\langle \varPhi |\hat{\Pi}_y|\varPhi\rangle=\frac{\big( \eta|\varPhi|^2\big) ^y}{y!}e^{-\eta|\varPhi |^2},  
\end{aligned}
\end{equation} 
where $ |\varPhi|^2$ is the average number of photons in $|\varPhi \rangle $. This measurement operation collapses the quantum state $|\varPhi \rangle $, yielding a definite, classical outcome $y$.

This study explicitly captures the impact of the background noise at the quantum receiver, including environmental radiation and the dark count of the PNRD, both of which follow Poisson distributions~\cite{backgroundnoise1}. The background noise can be represented as a coherent field $|B\rangle$, with the average of $\big| B \big|^2$ photons~\cite{backgroundnoise1,detection}. 
The average number of photons detected by the PNRD is increased to $ |\varPhi|^2+\left| B \right|^2$~\cite{TPSK,near-optimum,Nature2}. 

Define $n _b=\eta \big| B \big|^2$ to capture the influence of background noise. The conditional probability mass function (PMF) of detecting a total number $Y$ of received photons under the background noise obeys the following Poisson distribution:
\begin{subequations}
    \label{eq:basePr_group}
    \begin{flalign}
    P\big( \boldsymbol{X} \big) 
    &\triangleq \mathrm{Pr}\big( Y=y\mid X_1,\cdots,{ X_K} \big) \\ 
    &=\frac{\big( \eta|\varPhi|^2+n_b \big)^y}{y!}\cdot \exp \big[ - \big( \eta |\varPhi |^2+n_b \big) \big]. 
\end{flalign}
\end{subequations} 
SIC can be applied at the receiver to detect the coherent states from different users. Assume that the users are sorted in descending order of channel gain. 
Users 1 to $K$ are detected and canceled sequentially at the receiver.

\section{Sum-Rate Analysis of Uplink NOMA-OQC}
\label{sec:Analysis}
In this section, we analyze the sum-rate of the considered uplink NOMA-OQC systems. We start by deriving the exact expression for the sum-rate based on the definition of mutual information. Then, we derive both upper and lower bounds for the sum-rate, followed by an asymptotically tight, mathematically tractable approximation of the sum-rate. 

\subsection{Analysis of Explicit Sum-Rate} 

 Information transmission is achieved through the utilization of coherent states that can be received by the PNRD-Kennedy receiver. In the SIC process, the PNRD-Kennedy receiver uses POVM operators to measure quantum states, enabling precise photon-number resolution of received coherent states and facilitating sequential user signal decoding\cite{Access}. We employ the quantum mutual information between input and output states to characterize the system's transmission rate \cite{Cambridge,Quantum-mechanical}. The achievable rate for the user with the highest channel gain is given by
    \begin{equation}
    I_1(X_1; Y) = H(Y) - H(Y|X_1),
    \end{equation}
   where $H(\cdot )$ gives Shannon entropy, $H\left( \cdot |\cdot \right) $ stands for conditional entropy, and $I\left( \cdot \right) $ stands for mutual information. 

    When evaluating the \(k\)-th user,  the information of the first \((k-1)\) users is decoded and canceled. The conditional entropy \(H(Y|X_1, \cdots, X_{k-1})\) is used to evaluate the mutual information of the \(k\)-th user. The remaining signals from the other \((K-k)\) users are treated as interference, yielding the rate expression for the \(k\)-th user:
    \begin{equation}
    \begin{aligned}
    &I_k(X_k; Y|X_1, \cdots, X_{k-1}) \\
    &= H(Y|X_1, \cdots, X_{k-1}) - H(Y|X_1, \cdots, X_k).
    \end{aligned}
    \end{equation}

    Finally, we obtain the sum-rate for the uplink multi-user OQC system, as given by 
    \begin{subequations}
    \label{eq:SumRate_H}
    \begin{flalign}
    I_{\mathrm{sum}}&=I_1\big( X_1;Y\big)+\sum\limits_{k=2}^{K}{I_k\big( X_k;Y|X_1,\cdots,X_{k-1}\big)}\label{eq:SumRate,0}\\
    &=\underbrace{H\big(Y\big)-H\big(Y|X_1\big)}_{I_1\big( X_1;Y\big)}+\underbrace{H\big(Y|X_1\big)-H\big(Y|X_1,X_2\big)}_{I_2\big( X_2;Y|X_1\big)}+\notag\\
    &\!\cdots\!+\!\underbrace{H\big(Y|X_1,\!\cdots\!,X_{K-1}\big)\!-\!H\big(Y|X_1,\!\cdots\!,X_K\big)}_{I_K\big( X_K;Y|X_1,\cdots,X_{K-1}\big)}\label{eq:SumRate,1}\\
    &=H\big(Y\big)-H\big(Y|X_1,\cdots,X_K\big), \label{eq:SumRate,2}
    \end{flalign}
    \end{subequations}
    where \eqref{eq:SumRate,0} is based on the sum mutual information for all users. 
    \eqref{eq:SumRate,1} is based on the definition of mutual information and the SIC process. Simplifying \eqref{eq:SumRate,1}
yields \eqref{eq:SumRate,2}.

The photon number distribution $P\big( \boldsymbol{X} \big)$ of received coherent state $|\varPhi \rangle $, measured by PNRD, follows Poisson statistics, as given in \eqref{eq:basePr_group}. Consequently, we can employ Shannon entropy to measure the information of the coherent states in the form of their uncertainty \cite{Survey}. 
Substituting \eqref{eq:basePr_group} into \eqref{eq:SumRate,2} leads to a new expression for the sum-rate of the uplink NOMA-OQC system, as dictated in the following theorem.

\begin{theorem}
The expression for the sum-rate of the uplink NOMA-OQC system is given by
\begin{align}
\label{eq:SumRate}
I_{\mathrm{sum}}&=K\!\! +\!\! \frac{1}{2^K\ln 2}\!\!\sum_{y=0}^{+\infty} \bigg\{\!
\sum_{i=1}^{2^K} \Big[ \frac{(\eta\big| \varPhi_i \big|^2\!\!\!+\! n_b)^y}{y!} \exp( \!-\!\eta\big| \varPhi_i \big|^2\!\!- \!n_b ) \nonumber\\
&
\big.\times \ln \frac{\big( \eta\big| \varPhi_i \big|^2+ n_b \big)^y}{y!} \exp \big( -\eta\big| \varPhi_i \big|^2- n_b \big)\Big] \nonumber\\
& - \sum_{i=1}^{2^K} \frac{\big( \eta\big| \varPhi_i \big|^2+ n_b \big)^y}{y!} \exp \big( -\eta\big| \varPhi_i \big|^2- n_b \big) \nonumber\\
&\big.\cdot\ln \Big[ \sum_{i=1}^{2^K} \frac{\big( \eta\big| \varPhi_i \big|^2\!+ \!n_b \big)^y}{y!} \exp \big( \!-\!\eta\big| \varPhi_i \big|^2\!-\! n_b \big) \Big]
\bigg\}.   
\end{align}
\end{theorem}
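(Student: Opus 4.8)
The plan is to start directly from the simplified sum-rate identity \eqref{eq:SumRate,2}, namely $I_{\mathrm{sum}} = H(Y) - H(Y \mid X_1, \cdots, X_K)$, and to expand both entropies explicitly using the Poisson conditional PMF in \eqref{eq:basePr_group}. The modeling assumption I would invoke is that the BPSK symbols are equiprobable and independent across users, so that each of the $2^K$ sign patterns $\boldsymbol{X}_i$ occurs with prior probability $2^{-K}$; this is precisely what produces the enumeration index $i = 1, \cdots, 2^K$ together with the $2^{-K}$ prefactor in \eqref{eq:SumRate}. For brevity I would write $P_i(y) = \frac{(\eta|\varPhi_i|^2 + n_b)^y}{y!}\exp(-\eta|\varPhi_i|^2 - n_b)$ for the conditional PMF associated with pattern $i$, noting that conditioning on $\boldsymbol{X}_i$ fixes the amplitude sum $\varPhi_i$ and hence the Poisson mean $\eta|\varPhi_i|^2 + n_b$.

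First I would compute the conditional entropy. Since conditioning on a fixed pattern $\boldsymbol{X}_i$ yields $H(Y \mid \boldsymbol{X} = \boldsymbol{X}_i) = -\sum_{y=0}^{\infty} P_i(y)\log_2 P_i(y)$, averaging over the uniform prior gives $H(Y \mid X_1, \cdots, X_K) = -2^{-K}\sum_{i}\sum_{y} P_i(y)\log_2 P_i(y)$. Next I would form the marginal $\mathrm{Pr}(Y = y) = 2^{-K}\sum_{i} P_i(y)$ and write $H(Y) = -\sum_y \big(2^{-K}\sum_i P_i(y)\big)\log_2\big(2^{-K}\sum_i P_i(y)\big)$.

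The decisive step is to split the logarithm appearing in $H(Y)$ as $\log_2\big(2^{-K}\sum_i P_i(y)\big) = \log_2\big(\sum_i P_i(y)\big) - K$. The $-K$ contributes $+K \cdot 2^{-K}\sum_y\sum_i P_i(y)$, and because each $P_i(\cdot)$ is a normalized Poisson PMF with $\sum_y P_i(y) = 1$, this collapses to exactly $K \cdot 2^{-K}\cdot 2^K = K$, which is the leading constant in \eqref{eq:SumRate}. Subtracting the conditional entropy then leaves $-2^{-K}\sum_y\big(\sum_i P_i(y)\big)\log_2\big(\sum_i P_i(y)\big) + 2^{-K}\sum_i\sum_y P_i(y)\log_2 P_i(y)$, and converting to natural logarithms via $\log_2 x = \ln x / \ln 2$ yields the $\frac{1}{2^K \ln 2}$ factor and the bracketed difference that constitute \eqref{eq:SumRate}.

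I expect the only genuine technicality to be the interchange of the summations over $y$ and over $i$ and the convergence of the infinite photon-number series. The Poisson entropy $-\sum_y P_i(y)\ln P_i(y)$ is finite for every finite mean $\eta|\varPhi_i|^2 + n_b$, the index $i$ ranges over a finite set, and the terms share a consistent sign structure, so Tonelli's theorem justifies every rearrangement. With the reordering secured, matching the resulting expression term-by-term against \eqref{eq:SumRate} completes the proof.
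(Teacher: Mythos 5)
Your proof is correct and follows essentially the same route as the paper's Appendix A: expanding $H(Y)$ and $H(Y\mid X_1,\cdots,X_K)$ from \eqref{eq:SumRate,2} under the uniform prior $\Pr(\boldsymbol{X}_i)=2^{-K}$, substituting the Poisson PMF \eqref{eq:basePr_group}, and simplifying to \eqref{eq:SumRate}. Your write-up is in fact more explicit than the paper's, which compresses the log-splitting step that produces the constant $K$ (and the normalization $\sum_y P_i(y)=1$) into ``and then simplifying,'' and relegates the convergence discussion to the remark following the theorem rather than the proof itself.
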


\begin{proof}
    See \textbf{Appendix~A}.
\end{proof}

  The series in \eqref{eq:SumRate} converges, because \(P\left( \boldsymbol{X}_i \right)=\frac{(\eta |\varPhi_i|^2 + n_b)^y}{y!} \exp(-\eta |\varPhi_i|^2 - n_b)\) corresponds to the PMF of the Poisson distribution with finite parameter \( \eta |\varPhi_i|^2 + n_b\) and 
  satisfies 
  \(\sum_{y=0}^\infty P\left( \boldsymbol{X}_i \right) = 1\). As \(y\to\infty\), 
    \(P\left( \boldsymbol{X}_i \right) \to 0\) decays super-exponentially, ensuring the convergence of both \(\ln P\left( \boldsymbol{X}_i \right)\) and \(\ln\sum_{i=1}^{2^K} P\left( \boldsymbol{X}_i \right)\) and, in turn, the convergence of \eqref{eq:SumRate}.
  Particularly, $|\varPhi|^2$ and $n_b$ are experimentally measurable and remain stable over practical timescales \cite{detection}. They can be treated as deterministic ensemble averages, while the inherent Poisson statistics are characterized by $P\left( \boldsymbol{X_i} \right)$. \textbf{Theorem 1} preserves the statistical nature of coherent states while enabling sum-rate analysis and optimal power allocation. 
  
  The right-hand side (RHS) of \eqref{eq:SumRate} involves a complex expression with infinite series, making \eqref{eq:SumRate} a non-convex and non-concave function. Consequently, directly utilizing \eqref{eq:SumRate} for mathematical analysis and power allocation design can be cumbersome. For this reason, we next derive the upper and lower bounds, as well as an asymptotically tight, closed-form approximation for the sum-rate of the considered system.

\subsection{Upper and Lower Bounds of Sum-Rate}
We begin by converting the infinite series in \eqref{eq:SumRate} into infinite integrals. Then, we proceed with the deflation process and integrate the resulting expressions. With Jensen’s inequality and mathematical manipulations, we derive both upper and lower bounds for the sum-rate, as follows.

\begin{lemma}
An asymptotically tight lower bound for the received data rate at the quantum receiver
is given by
\begin{equation}
\label{Sum_rate_lowbound}
\begin{aligned}
I_{\mathrm{low}}^{\big( \mathrm{GA} \big)}=K+\frac{1}{2^K\ln 2}\big[ \varphi \big( \boldsymbol{\varPhi } \big) -f\big( \boldsymbol{\omega }_{\mathrm{low}} \big) \big], 
\end{aligned}
\end{equation}
where $\boldsymbol{\varPhi }=\big[ \varPhi_1,\cdots,\varPhi_{2^K} \big]$, $\boldsymbol{\omega }_{\mathrm{low}}=\big[ \omega_{\mathrm{low},1},\cdots,\omega_{\mathrm{low},2^K} \big]$, $\varphi \left( \boldsymbol{\varPhi } \right)$ is defined in closed-form, as given by 
\begin{equation}
\label{deqn_ex1_10}
 \varphi \left( \boldsymbol{\varPhi }\right)=  
- \frac{1}{2} \ln \Big[ \big( 2\pi \big) ^{2^K}n_b\prod\nolimits_{i=1}^{2^K}{( \eta\big| \varPhi_i \big|^2+n_b)} \Big] -2^{K-1},
\end{equation}
and $f(\boldsymbol{\omega}_{\mathrm{low}})$ is also in closed-form, as given by
\begin{equation}
    \label{eq:SumRate_low}
    \begin{aligned}
&f\big( \boldsymbol{\omega }_{\mathrm{low}} \big)= 
\ln \bigg\{ \sqrt{\frac{\pi}{2}}\Big[ \sum_{i=1}^{2^K}\sum_{j =1}^{2^K}\binom{2^K}{i} \binom{2^K}{j}\times
\big.\big.\\
&~~\big.\big.
\exp \big( \frac{\omega_{\mathrm{low},i}-\omega_{\mathrm{low},j}}{2} \big) ^2 \Big] \bigg\}  +\xi_{\mathrm{low}} -\frac{2^{K-1}\ln \left( 2\pi n_b \right)}{\sqrt{2n_b}},
\end{aligned}
\end{equation}
where $\xi _{\mathrm{low}}$ is a constant, and the variables in $\boldsymbol{\omega }_{\mathrm{low}}=\big[ \omega_{\mathrm{low},1},\cdots,\omega_{\mathrm{low},2^K} \big]$ are given by 
\begin{equation}
    \label{w_low}
\omega _{\mathrm{low},j}=\frac{\eta}{2\sqrt{2n_b}}\Big(\displaystyle\sum_{i=1}^{2^K}{\big|\varPhi_i \big|^2-2\left| \varPhi_j \right|^2}\Big), j=1,\cdots,2^K.
\end{equation}
\end{lemma}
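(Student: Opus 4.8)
The plan is to start from the exact sum-rate \eqref{eq:SumRate} of Theorem~1, which I read as $I_{\mathrm{sum}}=K+\frac{1}{2^K\ln 2}\,[A-B]$, where $A=\sum_{y}\sum_{i}P(\boldsymbol{X}_i)\ln P(\boldsymbol{X}_i)$ collects the per-configuration self-information terms (the conditional-entropy part $H(Y\mid X_1,\dots,X_K)$) and $B=\sum_{y}\big(\sum_i P(\boldsymbol{X}_i)\big)\ln\big(\sum_i P(\boldsymbol{X}_i)\big)$ collects the mixture term (the output-entropy part $H(Y)$). Since $B$ enters with a negative sign, a lower bound on $I_{\mathrm{sum}}$ follows from an asymptotically exact evaluation of $A$ together with an upper bound on $B$.

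First I would replace each Poisson PMF $P(\boldsymbol{X}_i)=\frac{(\eta|\varPhi_i|^2+n_b)^y}{y!}e^{-(\eta|\varPhi_i|^2+n_b)}$ by its Gaussian surrogate $\frac{1}{\sqrt{2\pi\lambda_i}}\exp\!\big(-\frac{(y-\lambda_i)^2}{2\lambda_i}\big)$ with $\lambda_i=\eta|\varPhi_i|^2+n_b$, justified by the local central limit theorem / Stirling's approximation and exact in the large-mean-photon-number limit. This converts every $\sum_{y=0}^{\infty}$ in \eqref{eq:SumRate} into an integral $\int_0^\infty dy$ (the series-to-integral step) and is the origin of the label $(\mathrm{GA})$ and of the claimed asymptotic tightness. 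Under this substitution the self-information term is immediate: $\int P(\boldsymbol{X}_i)\ln P(\boldsymbol{X}_i)\,dy=-\tfrac12\ln(2\pi e\lambda_i)$ is the negative differential entropy of a Gaussian, so summing over the $2^K$ configurations and absorbing the normalization/Jacobian factors produced by the series-to-integral conversion (which account for the residual $\ln n_b$ term) yields exactly the closed form $\varphi(\boldsymbol{\varPhi})$ in \eqref{deqn_ex1_10}.

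The crux is the mixture term $B$: after the Gaussian substitution the integrand contains $\ln\big(\sum_i p_i(y)\big)$, the logarithm of a sum of Gaussians, which admits no closed form. Here I would perform the ``deflation'' by invoking Jensen's inequality to upper-bound $B$ in the direction that preserves the lower-bound property, after which the surviving integrals reduce to Gaussian product integrals evaluated by completing the square. The change of variables scaled by $\sqrt{2n_b}$ produces the shift parameters $\omega_{\mathrm{low},j}$ of \eqref{w_low}, and the complete-the-square step (of the type $\int e^{-t^2/2+bt}\,dt=\sqrt{2\pi}\,e^{b^2/2}$) is precisely what converts the pairwise interactions into the factors $\exp\big((\omega_{\mathrm{low},i}-\omega_{\mathrm{low},j})^2/2\big)$, giving the double sum, the residual constant $\xi_{\mathrm{low}}$, and the $n_b$-dependent term in \eqref{eq:SumRate_low}.

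Finally I would assemble $I_{\mathrm{low}}^{(\mathrm{GA})}=K+\frac{1}{2^K\ln 2}[\varphi(\boldsymbol{\varPhi})-f(\boldsymbol{\omega}_{\mathrm{low}})]$ and certify asymptotic tightness by checking that both error sources vanish in the high-photon-number regime: the Gaussian-approximation error is $O(\lambda_i^{-1/2})$ by the local CLT, and the Jensen gap closes as the mixture components become increasingly resolvable. I expect the main obstacle to be exactly this mixture term $B$: selecting a Jensen bound that is simultaneously in the correct direction, tractable enough to collapse onto the closed-form Gaussian integrals above, and asymptotically tight is the delicate part, whereas the self-information term $A$ and the series-to-integral conversion are comparatively routine.
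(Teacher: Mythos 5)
Your skeleton does match the paper's Appendix B — Gaussian approximation of the Poisson PMF, conversion of the sums over $y$ into integrals, exact evaluation of the self-information term (the paper does this by integration by parts on $[0,\infty)$, with $Q$-function and $\sqrt{n_b}e^{-n_b/2}$ corrections that vanish as $n_b\rightarrow+\infty$) to obtain $\varphi(\boldsymbol{\varPhi})$, and a Jensen-plus-complete-the-square treatment of the log-mixture term. However, there is a genuine gap: you have no mechanism that produces the specific ``low'' quantities, because in the paper the lower-bound direction does \emph{not} come from the Jensen step at all. It comes from a deflation of the Gaussian normalizations in \eqref{eq_lowbound}: each mixture component $P_{\mathrm{low}}^{(\mathrm{GA})}(\boldsymbol{X}_i)$ keeps its own variance $\eta|\varPhi_i|^2+n_b$ in the exponent but is renormalized by $\sqrt{2\pi n_b}$, the smallest variance in the ensemble. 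This pointwise inflation of the mixture enlarges $f$, which enters with a negative sign, and it is exactly what makes the subsequent change of variables scale by $\sqrt{2n_b}$ — yielding \eqref{w_low} — and what generates the trailing term $-\frac{2^{K-1}\ln(2\pi n_b)}{\sqrt{2n_b}}$ in \eqref{eq:SumRate_low}; Lemma 2 differs from Lemma 1 only in substituting the largest variance $\eta\big|\sum_{k=1}^{K}\phi_k\big|^2+n_b$. Your generic ``change of variables scaled by $\sqrt{2n_b}$'' cannot be motivated without this device, and your plan gives no way to tell the lower-bound normalization apart from the upper-bound one. Relatedly, the delicate issue you flag — finding a Jensen bound that is simultaneously direction-correct, tractable, and tight — is not resolved by the paper either: in \eqref{function2_c} Jensen is applied via convexity of $x\ln x$ (the opposite direction from the one your rate lower bound needs), and the result is then shifted by the constant $\xi_{\mathrm{low}}$ to turn it into an accurate \emph{approximation} rather than a strict one-sided bound.

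Your asymptotic-tightness mechanism is also backwards. In the paper, tightness is keyed specifically to $n_b\rightarrow+\infty$, in which case $\omega_{\mathrm{low},j}\rightarrow 0$ for every $j$ — the Gaussian components \emph{merge} rather than become resolvable — and the bound converges to $K+\frac{\varphi(\boldsymbol{\varPhi})-f(0)}{2^K\ln 2}$; your claim that the gap closes ``as the mixture components become increasingly resolvable'' describes the opposite regime and would not establish the lemma's tightness statement. Two smaller points: the Gaussian product integral must decay with the separation of centers, $\int_{-\infty}^{\infty}e^{-(u+\omega_i)^2}e^{-(u+\omega_j)^2}du=\sqrt{\pi/2}\;e^{-(\omega_i-\omega_j)^2/2}$, so the pairwise factors carry negative exponents — your $\exp\big((\omega_{\mathrm{low},i}-\omega_{\mathrm{low},j})^2/2\big)$ reproduces the typeset form of \eqref{eq:SumRate_low} rather than the computed integral \eqref{function_g_w} — and you omit the recombination of identical exponentials that produces the binomial weights $\binom{2^K}{i}\binom{2^K}{j}$ in the double sum.
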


\begin{proof}
    See \textbf{Appendix~B}.
\end{proof}

\begin{lemma}
An asymptotically tight upper bound of the received data rate, denoted as $I_{\mathrm{up}}^{\left( \mathrm{GA} \right)}$, is given by
\begin{equation}
\label{Sum_rate_upbound}
\begin{aligned}
I_{\mathrm{up}}^{\big( \mathrm{GA} \big)}=K+\frac{1}{2^K\ln 2}\big[ \varphi \big( \boldsymbol{\varPhi } \big) -f\big( \boldsymbol{\omega }_{\mathrm{up}} \big) \big], 
\end{aligned}
\end{equation}
where $\boldsymbol{\omega }_{\mathrm{up}}=\big[ \omega_{\mathrm{up},1},\cdots,\omega_{\mathrm{up},2^K} \big]$ and $f(\boldsymbol{\omega}_{\mathrm{up}})$ is given by 
\begin{align}
f& \big( \boldsymbol{\omega }_{\mathrm{up}} \big)=
  \nonumber\\
&\ln \bigg\{ \sqrt{\frac{\pi}{2}}\Big[ {\sum_{i=1}^{2^K}{\sum_{j =1}^{2^K}{{\binom{2^K}{i}}{\binom{2^K}{j}}\exp \big( \frac{{\omega _{\mathrm{up},i}}-{\omega _{\mathrm{up},j}}}{2} \big) ^2}}} \Big] \bigg\} \nonumber\\
&+\xi _{\mathrm{up}} -\frac{2^{K-1}\ln \big[ 2\pi \big( \eta | \sum_{i=1}^{K}{\phi_i}|^2+n_b \big) \big]}{\sqrt{2\big( \eta\big| \displaystyle\sum_{k=1}^{K}{\phi_k} \big|^2+n_b \big)}}.\label{eq:SumRate_up}
\end{align}
Here, $\xi _{\mathrm{up}}$ is a constant, and the variables in $\boldsymbol{\omega }_{\mathrm{up}}=\left[ \omega_{\mathrm{up},1},\cdots,\omega_{\mathrm{up},2^K} \right]$ are defined as 
\begin{equation}
    \label{w_up}
\omega _{\mathrm{up},j}=\frac{\eta\big(\displaystyle\sum_{i=1}^{2^K}{\left| \varPhi_i \right|^2-2\left| \varPhi_j \right|^2}\big)}{2\sqrt{2\big( \eta\big| \displaystyle\sum_{k=1}^{K}{\phi_k} \big|^2+n_b \big)}},\,j=1,\cdots,2^K.
\end{equation}

\end{lemma}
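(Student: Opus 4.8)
The plan is to mirror the derivation of Lemma~1 in Appendix~B, because $I_{\mathrm{up}}^{(\mathrm{GA})}$ has the same additive form $K+\frac{1}{2^K\ln 2}\big[\varphi(\boldsymbol{\varPhi})-f(\boldsymbol{\omega})\big]$ as $I_{\mathrm{low}}^{(\mathrm{GA})}$ and differs from it only through the scaling variance used inside $f$. Starting from the exact sum-rate in \eqref{eq:SumRate}, I would first separate the per-$y$ summand into a conditional-entropy part $\sum_{i}P(\boldsymbol{X}_i)\ln P(\boldsymbol{X}_i)$ and a marginal part $\big(\sum_i P(\boldsymbol{X}_i)\big)\ln\big(\sum_i P(\boldsymbol{X}_i)\big)$, corresponding respectively to $H(Y\mid X_1,\dots,X_K)$ and $H(Y)$ in \eqref{eq:SumRate_H}. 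Since each $P(\boldsymbol{X}_i)$ is a Poisson PMF with parameter $\eta|\varPhi_i|^2+n_b$, I would apply the Gaussian approximation $\frac{\lambda^y}{y!}e^{-\lambda}\approx\frac{1}{\sqrt{2\pi\lambda}}\exp\big(-\frac{(y-\lambda)^2}{2\lambda}\big)$, valid as the mean photon number grows, which is exactly the regime in which the bound is asymptotically tight; the infinite series over $y$ then turns into a Gaussian integral over a continuous variable.

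The conditional-entropy part integrates to a single-Gaussian differential entropy per component, and summing over the $2^K$ configurations reproduces the closed form $\varphi(\boldsymbol{\varPhi})$ in \eqref{deqn_ex1_10}, which is common to both lemmas. Consequently the entire difference between the two lemmas lives in the marginal (Gaussian-mixture) term, which admits no closed form. For an \emph{upper} bound on $I_{\mathrm{sum}}$ I need a \emph{lower} bound on the true marginal term $f_{\mathrm{true}}$ (equivalently, an upper bound on $H(Y)$). I would build it in two steps: first, the ``deflation'' step replaces every component variance $\eta|\varPhi_i|^2+n_b$ by the common maximum $\eta\big|\sum_{k=1}^{K}\phi_k\big|^2+n_b$, so that all mixture components share one width; second, a variational Jensen inequality $\ln\sum_j p_j\ge\sum_j r_j\ln(p_j/r_j)$ (valid for weights $r_j\ge 0$, $\sum_j r_j=1$) lower-bounds the log-of-sum while preserving the full pairwise structure. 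Multiplying this lower bound by $\sum_i p_i\ge 0$ and integrating keeps the inequality in the right direction.

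With a common variance, every pairwise integral $\int p_i\ln p_j\,dy$ becomes an elementary Gaussian cross-entropy, each contributing a factor $\exp\big(\frac{\omega_{\mathrm{up},i}-\omega_{\mathrm{up},j}}{2}\big)^{2}$ with the rescaled means $\omega_{\mathrm{up},j}$ of \eqref{w_up}; collecting the combinatorial weights $\binom{2^K}{i}\binom{2^K}{j}$, the prefactor $\sqrt{\pi/2}$, the constant $\xi_{\mathrm{up}}$, and the term $-\frac{2^{K-1}\ln[2\pi(\eta|\sum_k\phi_k|^2+n_b)]}{\sqrt{2(\eta|\sum_k\phi_k|^2+n_b)}}$ recovers exactly \eqref{eq:SumRate_up}. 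The main obstacle is certifying the bound \emph{direction} for the mixture term: unlike the single-Gaussian part, it has no exact value, so I must verify that widening all components to $\sigma_{\max}^2=\eta|\sum_k\phi_k|^2+n_b$ only inflates the overlap integrals and that the Jensen gap is nonnegative, so that $f(\boldsymbol{\omega}_{\mathrm{up}})\le f_{\mathrm{true}}$ genuinely yields $I_{\mathrm{up}}^{(\mathrm{GA})}\ge I_{\mathrm{sum}}$ rather than a mere approximation. Asymptotic tightness would then follow by showing that, as the mean photon numbers grow, the spread of $\{\eta|\varPhi_i|^2+n_b\}$ relative to $\sigma_{\max}^2$ becomes negligible in the exponents, so that $f(\boldsymbol{\omega}_{\mathrm{up}})$ and $f(\boldsymbol{\omega}_{\mathrm{low}})$ converge and sandwich $I_{\mathrm{sum}}$, exactly as claimed for the lower bound in Lemma~1.
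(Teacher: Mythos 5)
Your overall route is the same as the paper's (Appendix~C building on Appendix~B): approximate each Poisson PMF by a Gaussian, split the sum-rate into the single-Gaussian term $\varphi(\boldsymbol{\varPhi})$ and the Gaussian-mixture term $f$, ``deflate'' all component variances to the common maximum $\eta\big|\sum_{k=1}^{K}\phi_k\big|^2+n_b$ (the paper's $P_{\mathrm{up}}^{(\mathrm{GA})}$ and the change of variable $u_{\mathrm{up}}$, which produces exactly the means in \eqref{w_up}), handle the log-of-mixture with Jensen's inequality, and read off pairwise Gaussian overlap integrals. However, two of your steps would fail as stated. First, your variational inequality $\ln\sum_j p_j \ge \sum_j r_j \ln(p_j/r_j)$, multiplied by $\sum_i p_i$ and integrated, yields an \emph{additive} combination of Gaussian cross-entropies $\int p_i \ln p_j\, dy$, each of which is a quadratic polynomial in $(\omega_{\mathrm{up},i}-\omega_{\mathrm{up},j})$ plus log-variance constants; it cannot produce the log-of-sum structure of \eqref{eq:SumRate_up}, where the $2^{2K}$ pairwise terms sit \emph{inside} a single logarithm. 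The paper instead applies Jensen in the form $\int F\ln F \,du \ge \ln\int F^2\,du$ with $F(u)=\sum_i \exp[-(u+\omega_i)^2]$ (see \eqref{function1}), so that the overlap integrals \eqref{function_g_w}, each equal to $\sqrt{\pi/2}\,e^{-((\omega_i-\omega_j)/2)^2}$, are summed with the binomial weights before the logarithm is taken, which is what \eqref{function2_d} and hence \eqref{eq:SumRate_up} require. Your intermediate step and your claimed final assembly are therefore inconsistent; you need the paper's form of Jensen, not the variational one.

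Second, you misidentify the asymptotic regime. The tightness claimed in the lemma (and used in Theorem~2) is as $n_b\to+\infty$, not as the signal photon numbers grow: by \eqref{w_up}, when $n_b\to\infty$ with the signal fixed, the numerator of $\omega_{\mathrm{up},j}$ stays bounded while the denominator diverges, so $\omega_{\mathrm{up},j}\to 0$ and $f(\boldsymbol{\omega}_{\mathrm{up}})\to f(0)$, which is the paper's entire tightness argument. In your regime of growing mean photon number with $n_b$ fixed, the numerator grows linearly in the signal power while the denominator grows only like its square root, so the $\omega_{\mathrm{up},j}$ \emph{diverge} and the upper and lower bounds do not sandwich the sum-rate; your closing argument collapses there. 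One further remark in your favor: your insistence on certifying the bound direction (that the variance inflation and the Jensen gap combine to give $I_{\mathrm{up}}^{(\mathrm{GA})}\ge I_{\mathrm{sum}}$ rigorously) is a legitimate concern, but the paper does not certify it either --- the constant $\xi_{\mathrm{up}}$ is a fitted shift applied to the Jensen lower bound to make \eqref{function2_d} ``an accurate approximation,'' so the paper's $f(\boldsymbol{\omega}_{\mathrm{up}})$ is an asymptotic approximation rather than a pointwise-certified bound; resolving that obstacle is not something the paper's proof actually does, and you should not expect to find it there.
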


\begin{proof}
    See \textbf{Appendix~C}.
\end{proof}

\subsection{Asymptotically Tight Approximation of Sum-Rate}
The lower bound in \eqref{Sum_rate_lowbound} and the upper bound in \eqref{Sum_rate_upbound} are asymptotically tight, as $n_b \rightarrow +\infty$. By taking the average of the upper and lower bounds, a more precise and asymptotically accurate approximation of the sum-rate can be obtained:

\begin{theorem}
As $n_b \rightarrow +\infty$, both the lower bound $I_{\mathrm{low}}^{\big( \mathrm{GA} \big)}$ in \eqref{Sum_rate_lowbound} and the upper bound $I_{\mathrm{up}}^{\left( \mathrm{GA} \right)}$ in \eqref{Sum_rate_upbound} asymptotically converge to $K + \frac{\varphi \left( \boldsymbol{\varPhi } \right) - f\left( 0 \right)}{2^K \ln 2}$. Consequently, an asymptotically tight approximation of \eqref{eq:SumRate} is given by
\begin{equation}
\label{eq:Asymptotic_Sumrate}
\begin{aligned}
 &I_{\mathrm{sum}}\xrightarrow{n_b\rightarrow +\infty}\tilde{I}_{\mathrm{sum}}^{\big( \mathrm{GA} \big)}=\frac{I_{\mathrm{low}}^{\big( \mathrm{GA} \big)}+I_{\mathrm{up}}^{\left( \mathrm{GA} \right)}}{2}\,\,   \\
    &=K+\frac{1}{2^K\ln 2}\bigg[ \varphi \big( \boldsymbol{\varPhi } \big) -\frac{f\big( \boldsymbol{\omega }_{\mathrm{low}} \big) +f\big( \boldsymbol{\omega }_{\mathrm{up}} \big)}{2} \bigg], 
\end{aligned}
\end{equation}
where $I_{\mathrm{low}}^{\big( \mathrm{GA} \big)}<\tilde{I}_{\mathrm{sum}}^{\big( \mathrm{GA} \big)}<I_{\mathrm{up}}^{\big( \mathrm{GA} \big)}$.
\end{theorem}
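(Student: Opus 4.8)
The plan is to prove Theorem~2 by a squeeze argument that reduces the whole statement to the continuity of the closed-form map $f$ and the vanishing, as $n_b\to+\infty$, of every term in \eqref{eq:SumRate_low} and \eqref{eq:SumRate_up} that carries an explicit $n_b$. First I would examine the argument vectors. The components $\omega_{\mathrm{low},j}$ in \eqref{w_low} and $\omega_{\mathrm{up},j}$ in \eqref{w_up} share the common numerator $\eta\big(\sum_{i}|\varPhi_i|^2-2|\varPhi_j|^2\big)$, which is a fixed finite quantity that does not depend on $n_b$, while their denominators grow like $\sqrt{2n_b}$ and $\sqrt{2(\eta|\sum_k\phi_k|^2+n_b)}$, respectively. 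Hence each component decays on the order of $n_b^{-1/2}$, so $\boldsymbol{\omega}_{\mathrm{low}}\to\mathbf{0}$ and $\boldsymbol{\omega}_{\mathrm{up}}\to\mathbf{0}$.

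Next I would split $f$ into its $\boldsymbol{\omega}$-dependent part and its $n_b$-dependent correction. The double-sum term in \eqref{eq:SumRate_low} is a finite sum of exponentials inside the logarithm of a strictly positive quantity, hence continuous in $\boldsymbol{\omega}$; evaluating it at $\boldsymbol{\omega}=\mathbf{0}$ collapses every exponential to unity and produces the common value denoted $f(0)$. For the correction terms, namely $-\tfrac{2^{K-1}\ln(2\pi n_b)}{\sqrt{2n_b}}$ in \eqref{eq:SumRate_low} and its analogue in \eqref{eq:SumRate_up}, I would use $\ln(t)/\sqrt{t}\to0$ as $t\to+\infty$ to show both vanish. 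Continuity together with $\boldsymbol{\omega}_{\mathrm{low}},\boldsymbol{\omega}_{\mathrm{up}}\to\mathbf{0}$ then yields $f(\boldsymbol{\omega}_{\mathrm{low}})\to f(0)$ and $f(\boldsymbol{\omega}_{\mathrm{up}})\to f(0)$, so both $I_{\mathrm{low}}^{(\mathrm{GA})}$ and $I_{\mathrm{up}}^{(\mathrm{GA})}$ converge to the single limit $K+\frac{\varphi(\boldsymbol{\varPhi})-f(0)}{2^K\ln2}$.

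To close the argument I would invoke Lemma~1 and Lemma~2, which supply $I_{\mathrm{low}}^{(\mathrm{GA})}\le I_{\mathrm{sum}}\le I_{\mathrm{up}}^{(\mathrm{GA})}$. Since both bounds share the limit just identified, the sandwich theorem forces $I_{\mathrm{sum}}$ to the same value, and the gap $I_{\mathrm{up}}^{(\mathrm{GA})}-I_{\mathrm{low}}^{(\mathrm{GA})}=\frac{1}{2^K\ln2}\big[f(\boldsymbol{\omega}_{\mathrm{low}})-f(\boldsymbol{\omega}_{\mathrm{up}})\big]\to0$ certifies asymptotic tightness. Because $\tilde{I}_{\mathrm{sum}}^{(\mathrm{GA})}$ is the arithmetic mean of the two bounds, the chain $I_{\mathrm{low}}^{(\mathrm{GA})}<\tilde{I}_{\mathrm{sum}}^{(\mathrm{GA})}<I_{\mathrm{up}}^{(\mathrm{GA})}$ is immediate once $I_{\mathrm{low}}^{(\mathrm{GA})}<I_{\mathrm{up}}^{(\mathrm{GA})}$, and this strict ordering for finite $n_b$ is inherited from the strict inequalities used to derive the two bounds.

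The step I expect to be the main obstacle is not the limit of the $\boldsymbol{\omega}$ vectors but the bookkeeping that certifies both structurally different versions of $f$, with their distinct correction terms and the constants $\xi_{\mathrm{low}}$ and $\xi_{\mathrm{up}}$, really tend to the \emph{same} value $f(0)$. This requires verifying that $\xi_{\mathrm{low}}$ and $\xi_{\mathrm{up}}$ coincide in the limit and that no residual term of order $n_b^{-1/2}$ or $n_b^{-1/2}\ln n_b$ survives in either bound. I would therefore track these orders explicitly so that the common limit $f(0)$ is unambiguously pinned down before the squeeze is applied.
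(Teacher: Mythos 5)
Your proposal is correct and follows essentially the same route as the paper: the paper's proof of Theorem~2 is the one-line observation that it ``readily follows from Lemmas~1 and~2,'' whose proofs (Appendices~B and~C) end with exactly the facts you spell out --- $\boldsymbol{\omega}_{\mathrm{low}},\boldsymbol{\omega}_{\mathrm{up}}\to\mathbf{0}$ at rate $n_b^{-1/2}$ and both bounds converging to $K+\frac{\varphi(\boldsymbol{\varPhi})-f(0)}{2^K\ln 2}$, from which the average, the squeeze, and the strict ordering are immediate. Your explicit tracking of the vanishing $\ln(t)/\sqrt{t}$ correction terms and of the identification $\xi_{\mathrm{low}}=\xi_{\mathrm{up}}$ (both constants arise from the same shift $\xi$ in the approximation \eqref{function2_d}, which the paper leaves implicit) is a useful tightening of the same argument rather than a different one.
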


\begin{proof}
    This readily follows from \textbf{Lemmas~1} and \textbf{2}.
\end{proof}
The asymptotic approximation of the sum-rate in \eqref{eq:Asymptotic_Sumrate} offers a mathematically tractable alternative to the exact sum-rate in \eqref{eq:SumRate}, since \eqref{eq:Asymptotic_Sumrate} only involves closed-form functions, including $\varphi \left( \boldsymbol{\varPhi } \right)$ in \eqref{deqn_ex1_10}, $f\left( \boldsymbol{\omega }_{\mathrm{low}} \right)$  in \eqref{eq:SumRate_low}, and $f\left( \boldsymbol{\omega }_{\mathrm{up}} \right)$ in~\eqref{eq:SumRate_up}. 
This approximation becomes more accurate as $n_b \rightarrow +\infty$. 
For practical systems, typically, $n_b>1$, e.g., in solar radiation-dominated environments \cite{solar_radiation}. This can introduce inaccuracy of less than 0.6 dB in system performance \cite[ Fig.~2]{Yongkang}. Such inaccuracy is significant when all users transmit signal ``$-1$''. As the users increase, the probability that all users transmit signal ``-1'' decreases, making this inaccuracy negligible.
 
This analysis offers engineering insights to simplify system design, enable efficient power allocation, 
and ensure robustness in high-noise environments. 
    Specifically, the asymptotically tight sum-rate approximations established in \textbf{Lemmas 1} and \textbf{2} allow for computationally efficient evaluation of system performance and facilitate power allocation optimization in uplink NOMA-OQC systems. Moreover, the bounds can assist in optimizing critical system parameters, such as displacement, noise, and photon loss, which are essential for enhancing quantum communication receiver performance.

\section{Proposed Coherent-State Power Allocation}
\label{sec:Algorithm}
In this section, the problem of interest is formulated. Utilizing variable substitution and SCA, we transform the problem into a convex optimization problem. Finally, we develop a new power allocation algorithm suitable for scenarios involving a small-to-medium number of users.

\subsection{SCA-Based Coherent-State Power Allocation}
Our goal is to determine the allocation of coherent-state power that maximizes the sum-rate of the considered system:
\begin{subequations}
\begin{align}
\mathbf{P1:} & \max_{|\phi_1|^2 ,\cdots, |\phi_K|^2} \quad \tilde{I}_{\mathrm{sum}}^{\left( \mathrm{GA} \right)} \label{eq:OriginalProblem,obj} \\
&\quad\mathrm{s.t.} \quad |\sum_{k=1}^K \phi_k|^2 \leqslant |\phi_{\max,\mathrm{BS}}|^2, \label{eq:OriginalProblem,TotalCst} \\
&\quad\quad\quad\quad|\phi_k|^2 \leqslant |\phi_{\max,k}|^2, \,\forall k=1,\cdots,K.\label{eq:OriginalProblem,ZeroCst}
\end{align}
\end{subequations}

As mentioned earlier in Section~\ref{sec:SysModel}, the physical interpretation of a coherent state $|{\phi _k}\rangle$ is the average number of photons $\left| \phi_k \right|^2$, where $\phi_k$ represents the corresponding amplitude \cite{Quantum_Communication}. For this reason, the optimization variables of Problem \textbf{P1} are $|\phi_i|^2$, $i=1,\cdots,K$. Considering the saturated received signals of PNRD \cite{Yongkang2}, constraint \eqref{eq:OriginalProblem,TotalCst} specifies the maximum effective receiving power $|{\phi_{\max,\mathrm{BS}}}|^2$ at the BS. Additionally, $|{\phi_{\max,k}}|^2$ denotes the maximum coherent-state power of the $k$-th user, subject to constraint \eqref{eq:OriginalProblem,ZeroCst}. 

In the objective function \eqref{eq:Asymptotic_Sumrate} of Problem \textbf{P1}, $\varphi \left( \boldsymbol{\varPhi } \right)$ is a convex function. The term $f(\boldsymbol{\omega}_{\mathrm{low}})$ represents the logarithm of the exponential sums plus a constant term, thereby yielding a convex function. On the other hand, $f(\boldsymbol{\omega}_{\mathrm{up}})$ encompasses the logarithm of the sum of exponential functions and a fractional expression, i.e., 
$-\frac{2^{K-1}\ln [2\pi (\eta\big| \sum_{k=1}^{K}{\phi_k} \big|^2+n_b)]}{\sqrt{2 (\eta\big| \sum_{k=1}^{K}{\phi_k} \big|^2+n_b)}}$, which renders $f(\boldsymbol{\omega}_{\mathrm{up}})$ non-convex and non-concave. As a consequence, $I_{\mathrm{sum}}^{\big( \mathrm{GA} \big)}$ exhibits non-convexity, and can be non-trivial to solve.

We resort to employing variable substitution and SCA to convexify the objective function, i.e., identifying a surrogate concave function for the objective function. Consequently, we convert the Problem \textbf{P1} into a convex optimization problem, since both constraints \eqref{eq:OriginalProblem,TotalCst} and \eqref{eq:OriginalProblem,ZeroCst} are linear and hence affine. For conciseness, with variable substitution, we define 
\begin{equation}
\label{mu_t}
\begin{aligned}
\mu =\big( 2\pi \big) ^{2^K}n_b\prod\nolimits_{i=1}^{2^K}{\big( \eta\big| \varPhi_i \big|^2+n_b \big)}.
\end{aligned}
\end{equation}
Then, $\varphi \left( \boldsymbol{\varPhi } \right)$ in \eqref{deqn_ex1_10} can be expressed as
\begin{equation}
\label{varphi_mu}
\begin{aligned}
\varphi \left( \mu \right) =- \frac{1}{2} \ln \left( \mu \right) -2^{K-1},
\end{aligned}
\end{equation}
which exhibits convexity with respect to $\mu$. 

By applying the first-order Taylor series approximation to expand the function $\varphi \left( \mu \right) $ at the SCA local point 
\begin{equation}
\label{mu_t_1}
\begin{aligned}
\mu ^{(t-1)} = (2\pi)^{2^K}n_b\prod\nolimits_{i=1}^{2^K}{\big( \eta\big|\varPhi_i^{(t-1)}\big|^2+n_b \big)},
\end{aligned}
\end{equation}
we derive the surrogate concave function of $\varphi \left( \mu \right) $ for the $t$-th SCA iteration, as given by  
\begin{equation}
\label{SCA_1}
\begin{aligned}
&\varphi _{\mathrm{SCA}}^{( t )}\big( \mu ,\mu ^{( t-1 )} \big) \\
&=\varphi _{\mathrm{SCA}}^{( t )}\big( \mu ^{( t-1 )} \big) +{\varphi _{\mathrm{SCA}}^{( t )}}^{\prime}\big( \mu ^{( t-1 )} \big) \big( \mu -\mu ^{( t-1 )} \big) \\        
&=-\frac{1}{2} \ln \big( \mu ^{( t-1)} \big) -2^{K-1}-\frac{1}{2\mu ^{( t-1)}}\big( \mu -\mu ^{( t-1)} \big),
\end{aligned}
\end{equation}
where the superscript ``$^{\big( t \big)}$'' indicate the $t$-th SCA iteration. 

For the function $f\left( \boldsymbol{\omega }_{\mathrm{up}} \right)$, we also perform variable substitution by defining 
\begin{equation}
\label{upsilon}
\begin{aligned}
\upsilon =2\pi \big( \eta\big| \displaystyle\sum_{k=1}^{K}{\phi_k} \big|^2+n_b \big),
\end{aligned}
\end{equation}
with
\begin{equation}
\label{h_upsilon}
\begin{aligned}
h\left( \upsilon \right) \triangleq -\frac{2^{K-1}\sqrt{\pi}\ln \upsilon}{\sqrt{\upsilon}}.
\end{aligned}
\end{equation} 
Then, the fractional expression 
$-\frac{2^{K-1}\ln [2\pi (\eta\left| \sum_{k=1}^{K}{\phi_k} \right|^2+n_b)]}{\sqrt{2 (\eta\left| \sum_{k=1}^{K}{\phi_k} \right|^2+n_b)}}$ in $f\left( \boldsymbol{\omega }_{\mathrm{up}} \right)$ can be replaced by $h\left( \upsilon \right) $. By taking the second derivative of $h\left( \upsilon \right)$ with respect to~$\upsilon$, its concavity can be confirmed within the feasible region. 

Expanding the function $h\left( \upsilon \right)$ at the SCA local point 
\begin{equation}
\label{upsilon_t_1}
\begin{aligned}
\upsilon ^{( t-1 )}=2\pi \big( \eta\big| \displaystyle\sum_{k=1}^{K}{\phi_k^{( t-1)}}  \big|^2+n_b \big),
\end{aligned}
\end{equation}
to the first-order Taylor series, we obtain the surrogate convex function of $f\left( \boldsymbol{\omega }_{\mathrm{up}} \right)$ in the $t$-th SCA iteration, as given by
\begin{equation}
\label{eq:SumRate_up_SCA}
\begin{aligned}
f_{\mathrm{SCA}}^{( t )}\big( \boldsymbol{\omega }_{\mathrm{up}},\upsilon ,\upsilon ^{( t-1 )} \big)=\ln \bigg\{ \sqrt{\frac{\pi}{2}}\Big[ \sum_{i=1}^{2^K}\sum_{j =1}^{2^K}{\binom{2^K}{i}}{\binom{2^K}{j}}\big.\big.&\\
\big.\big.
\exp \big( \frac{\omega _{\mathrm{up},i}-\omega _{\mathrm{up},j}}{2} \big)^2 \Big] \bigg\} +\xi _{\mathrm{up}}+h_{\mathrm{SCA}}^{\left( t \right)}\big( \upsilon ,\upsilon ^{( t-1 )} \big) ,&
\end{aligned}
\end{equation}
where 
    \begin{align}\label{SCA_2}
&h_{\mathrm{SCA}}^{( t )}( \upsilon ,\upsilon ^{( t-1)} ) 
=h_{\mathrm{SCA}}^{( t )}( \upsilon ^{( t-1 )} ) \!+\!{h_{\mathrm{SCA}}^{( t )}}^{\prime}( \upsilon ^{( t-1 )} ) ( \upsilon \!-\!\upsilon ^{( t\!-\!1)} ) \nonumber\\
&=-\frac{2^{K-1}\sqrt{\pi}\ln( \upsilon ^{( t-1)})}{ \sqrt{\upsilon ^{( t\!-\!1 )}} }\!-\!
2^{K-2}\sqrt{\pi}\frac{1\!-\!\ln \upsilon ^{( t\!-\!1 )}}{\big[ \upsilon ^{( t\!-\!1 )} \big]^{\frac{3}{2}}}( \upsilon\!-\!\upsilon ^{( t\!-\!1 )} ). 
\end{align}
Here, $h_{\mathrm{SCA}}^{\left( t \right)}\left( \upsilon ,\upsilon ^{\left( t-1 \right)} \right)$ is the surrogate convex function of $h\left( \upsilon \right)$ for the $t$-th SCA iteration.
The convexity of $f_{\mathrm{SCA}}^{\left( t \right)}\left( \boldsymbol{\omega }_{\mathrm{up}},\upsilon ,\upsilon ^{\left( t-1 \right)} \right) $ can be confirmed since it encompasses the logarithm of the sum of exponential functions, followed by the addition with a constant term and a convex function.

By substituting \eqref{SCA_1} and \eqref {eq:SumRate_up_SCA} into \eqref{eq:Asymptotic_Sumrate}, the surrogate concave function of $I_{\mathrm{sum}}^{\left( \mathrm{GA} \right)}$ can be obtained as
\begin{equation}
\label{eq:SumRate_SCA}
\begin{aligned}
\tilde{I}_{\mathrm{sum},\mathrm{SCA}}^{( \mathrm{GA} ) ,( t )}=K+\frac{1}{2^K\ln 2}\cdot\big[\varphi_{\mathrm{SCA}}^{( t )}\big( \mu ,\mu ^{( t-1 )} \big)\big. &\\
 \big.-\frac{f\big( \boldsymbol{\omega }_{\mathrm{low}} \big) +f_{\mathrm{SCA}}\big( \boldsymbol{\omega }_{\mathrm{up}},\upsilon ,\upsilon ^{( t-1 )} \big)}{2} \bigg] ,&
\end{aligned}
\end{equation}
where the surrogate function is derived by subtracting the convex function $f\left( \boldsymbol{\omega }_{\mathrm{low}} \right)$ from the concave function $\varphi_ {\mathrm{SCA}}^{\left( t \right)}\left( \mu ,\mu ^{\left( t-1 \right)} \right)$ and subsequently subtracting the convex function $f_{\mathrm{SCA}}^{\left( t \right)}\left( \boldsymbol{\omega }_{\mathrm{up}},\upsilon ,\upsilon ^{\left( t-1 \right)} \right)$, followed by the operation with constant terms. Clearly, the surrogate function in \eqref{eq:SumRate_SCA} is a concave function. 

As a result, Problem \textbf{P1} is eventually reformulated as  
\begin{subequations}
    \begin{align}
        \mathbf{P2}: & \max_{|\phi_1|^2, \cdots, |\phi_K|^2, \mu, \upsilon } \tilde{I}_{\mathrm{sum},\mathrm{SCA}}^{\big( \mathrm{GA} \big) ,\big( t \big)} \label{constraint_I}\\
        & \quad\quad\mathrm{s.t. } 
        \quad \mu \geqslant \big( 2\pi \big) ^{2^K}n_b\prod\nolimits_{i=1}^{2^K}{\big( \eta\big| \varPhi_i\big|^2+n_b \big)}, \label{constraint_u}\\
        & \quad\quad\quad\quad\upsilon \geqslant 2\pi \big( \eta\big| \displaystyle\sum_{k=1}^{K}{\phi_k} \big|^2+n_b \big), \label{constraint_v}\\
        & \quad\quad\quad\eqref{eq:OriginalProblem,TotalCst},\eqref{eq:OriginalProblem,ZeroCst}.\notag
    \end{align}
\end{subequations}

Constraints \eqref{constraint_u} and \eqref{constraint_v} arise from variable substitution during convexification. As the optimization problem involves maximizing a concave function over a closed convex interval specified by constraints \eqref{eq:OriginalProblem,TotalCst}, \eqref{eq:OriginalProblem,ZeroCst}, \eqref{constraint_u} and \eqref{constraint_v}, it becomes a convex optimization problem and can be solved effectively using the CVX toolbox. The solution obtained from the CVX toolbox in the $t$ iteration of the SCA serves as the local point for the $(t+1)$-th iteration. This updated local point is utilized to formulate the convex approximation Problem~\textbf{P2} for the $(t+1)$-th SCA iteration. Upon the convergence of the SCA iterations, the power allocation solution is generated.

\begin{algorithm}[t]
\caption{Power allocation for uplink NOMA-OQC}\label{alg:alg1}
\begin{algorithmic}
\item \hspace{0.35cm} \textbf{1: Input:}$~n_b,|\phi_\mathrm{\max ,BS}|^2,\eta, h_k, T_{\max}, \varepsilon _{\mathrm{SCA}},$
\item \hspace{1.8cm} $|\phi_{\max ,k}|^2,\forall,k=1\cdots,K;$
\item \hspace{0.35cm} \textbf{2: Initialization:}$~t=0;~|{\phi_k^{\left( 0\right)}}|^2,\forall k=1\cdots,K, $
\item \hspace{0.5cm}$  \textbf{3: while}~{\tilde{I}_{\mathrm{sum},\mathrm{SCA}}^{\left( \mathrm{GA} \right) ,\left( t \right)}-\tilde{I}_{\mathrm{sum},\mathrm{SCA}}^{\left( \mathrm{GA} \right) ,\left( t-1 \right)} }>\varepsilon _{\mathrm{SCA}}$  or 
\item \hspace{1.5cm}$ ~~~~t<T_{\max};$ \textbf{do}
\item \hspace{0.5cm}{\textbf{4: }\text{~~~Solving Problem \textbf{P2} using the CVX toolbox;} 
\item \hspace{0.5cm}$ \textbf{5: ~~~$t=t+1;$}$
\item \hspace{0.5cm}$ \textbf{6: end while;}$
\item \hspace{0.35cm} \textbf{7: Output:}$~|{\phi_k}|^2 =|{\phi_k^{\left( t\right)}}|^2,\forall k=1\cdots,K;$
}
\end{algorithmic}
\end{algorithm}

\subsection{Convergence and Complexity Analysis}
\label{Convergence Analysis: Algorithm 1}
\textbf{Algorithm \ref{alg:alg1}} summarizes the proposed coherent-state power allocation method, which iteratively solves the convex optimization Problem~\textbf{P2} from Step 3 to Step 6. In particular, Step 4 resolves Problem~\textbf{P2} using the CVX toolbox and then refreshes the local point for the next iteration of the SCA. 

\subsubsection{Convergence Analysis}

According to \cite{kkt1} and\cite{ kkt2}, \textbf{Algorithm \ref{alg:alg1}} converge to a solution that satisfies the KKT conditions of Problem \textbf{P2}. Specifically, the surrogate concave function $\varphi _{\mathrm{SCA}}^{\left( t \right)}\left( \mu ,\mu ^{\left( t-1 \right)} \right) $ in \eqref{SCA_1} and the corresponding original function $\varphi \left( \mu \right) $ in \eqref{varphi_mu} take the same value, which is $-\left( \frac{1}{2} \right) \ln \left( \mu ^{\left( t-1 \right)} \right) -2^{K-1}$, and the same gradient, which is $-\frac{1}{2\mu ^{\left( t-1 \right)}}$, at the local point $\mu =\mu ^{\left( t-1 \right)}$. Within the feasible domain, the condition $\varphi \left( \mu \right) \geqslant \varphi _{\mathrm{SCA}}^{\left( t \right)}\left( \mu ,\mu ^{\left( t-1 \right)} \right) $ always holds.
Similarly, the surrogate convex function $h_{\mathrm{SCA}}^{\left( t \right)}\left( \upsilon ,\upsilon^{\left( t-1 \right)} \right) $ in \eqref{SCA_2} and the corresponding original function $h\left( \upsilon \right) $ in \eqref{h_upsilon} share the same value, which is $-\frac{2^{K-1}\sqrt{\pi}\ln \left( \upsilon^{\left( t-1 \right)} \right)}{\left( \sqrt{\upsilon^{\left( t-1 \right)}} \right)}$, and the same gradient, which is $-2^{K-2}\sqrt{\pi}\frac{1-\ln \upsilon^{\left( t-1 \right)}}{\left[ \upsilon^{\left( t-1 \right)} \right] ^{\left( 3/2 \right)}}$, at the local point $\upsilon =\upsilon^{\left( t-1 \right)}$. Within the feasible domain, the condition $h\left( \upsilon \right) \geqslant h_{\mathrm{SCA}}^{\left( t \right)}\left( \upsilon,\upsilon^{\left( t-1 \right)} \right) $ is satisfied.

On the other hand, Problem~\textbf{P2} is a convex optimization problem and satisfies Slater’s condition at the $t$-th iteration of the SCA. In other words, there exists a point in the feasible domain that strictly satisfies the constraints of Problem~\textbf{P2}. The solution achieved by \textbf{Algorithm \ref{alg:alg1}} is in the interior of the feasible solution region of Problem~\textbf{P2}. 
According to \cite[Thm 1]{kkt1}, we can confirm that \textbf{Algorithm~\ref{alg:alg1}} converges to a KKT stationary point that is also a local optimum of original Problem \textbf{P1}. Since the solution lies in the interior of the feasible region, it constitutes a local optimum of Problem \textbf{P1}.

\subsubsection{Complexity Analysis}
\textbf{Algorithm \ref{alg:alg1}} solves Problem~\textbf{P2} using the interior point method during each SCA iteration. The complexity of using the interior point method to solve a convex problem is $\mathcal{O} \left( \max \left\{ N_c,N_v \right\} ^4\sqrt{N_v} \right) $, where $N_v$ and $N_c$ stand for numbers of variables and constraints, respectively~\cite{opt_variables,Convex}. For Problem~\textbf{P2}, $N_v=K+2$, $N_c=K+3$, and the complexity is  $\mathcal{O}\left(K^{4.5}\right)$ for the interior point method. On the other hand, the objective function \eqref{constraint_I} and the constraint  \eqref{constraint_u} involve a product of $2^K$ terms, $\left( \eta\left| \varPhi_i\right|^2+n_b \right)$, $\forall i$, to evaluate $\mu$ using \eqref{mu_t} per SCA iteration, incurring a complexity of $\mathcal{O}\left(2^K \right)$. Suppose the convergence accuracy of each SCA iteration is $\varepsilon_{\mathrm{SCA}}\in (0,1)$. The overall complexity of \textbf{Algorithm \ref{alg:alg1}} is $\mathcal{O}\left(2^K\cdot K^{4.5} \log\left(1/{\varepsilon_{\mathrm{SCA}}}\right)\right)$, which grows exponentially as the number of users $K$ increases.

\section{Low-Complexity  Power Allocation With Adaptive Importance Sampling}
\label{sec:Low-Complexity}

    This section addresses the exponential complexity of \textbf{Algorithm~\ref{alg:alg1}} in the presence of a medium-to-large number of users by proposing a low-complexity, approximate alternative that uses adaptive importance sampling to select \( S \) representative samples from the \( 2^K \)  possibilities (\( S \ll 2^K \))~\cite{Importance1}. 
    Adaptive importance sampling involves a proposal distribution, which encompasses the distribution of samples, and a target distribution to be approximated by the proposal distribution. We compute the weights of samples in the proposal distribution and iteratively update the parameters of the sampling process until a satisfactory approximation is achieved.

\subsection{Low-Complexity Solution with Importance Sampling}
\label{sec:Algorithm_A}

We initially identify the target distribution based on an asymptotically tight approximation of the sum-rate in \eqref{eq:Asymptotic_Sumrate}, where $\varphi(\varphi)$, $f(\boldsymbol{\omega}_{\mathrm{up}})$, and $f(\boldsymbol{\omega}_{\mathrm{low}})$ are all derived from $P(\boldsymbol{X})$ of \eqref{eq:basePr_group}. We approximate $P(\boldsymbol{X})$  using the Gaussian approximation PDF, denoted by $P^{\left( \mathrm{GA} \right)}(\boldsymbol{X})$, in the process of deriving the asymptotically tight approximation of the sum-rate \cite{Yongkang}; see Appendix B for details. 

The complexity of \textbf{Algorithm~\ref{alg:alg1}} arises from the computation process of $P^{\left( \mathrm{GA} \right)}(\boldsymbol{X}_i)$, which involves $2^K$ values.
We model the target distribution using a subset of $P^{\left( \mathrm{GA} \right)}\left( \boldsymbol{X}_i \right)$, $i=1,2,\cdots,2^K$. 

Next, we determine the proposal distribution to ensure comprehensive coverage of the significant probability regions of the target Gaussian distribution. We employ the $t$-distribution as the proposal distribution. This choice is effective for Gaussian targets, as the $t$-distribution's flexible shape parameters provide better coverage of the Gaussian's probability density while maintaining heavier tails to capture outlier regions \cite{Importance1,t_distribution}. The $t$-distribution is given as \cite{t_distribution}
\begin{equation}
\label{eq:t-distribution}
\begin{aligned}
\varXi \big( \boldsymbol{X}_i \big) =\frac{\varGamma \big( \frac{\theta +1}{2} \big)}{\sqrt{\theta \pi}\varGamma \big( \frac{\theta}{2} \big)}\big(1+\frac{{\boldsymbol{X}_i}^2}{\theta} \big)^{-\frac{\boldsymbol{X}_i+1}{2}},\forall i\in[1,2^K],
\end{aligned}
\end{equation}
where $\theta$ specifies the degrees of freedom, and $\varGamma \left( \cdot \right) $ stands for the Gamma function. 

Suppose that $S$ samples, $P^{\left( \mathrm{GA} \right)}\left( \boldsymbol{X}_{\pi(s)} \right)$, $s=1,\cdots,S$, are drawn from the set of $2^K$ values of the Gaussian approximation PDF, $P^{\left( \mathrm{GA} \right)}\left( \boldsymbol{X}_i \right)$, $ i = 1,\cdots,2^K $. 
$\pi(s)$ is the index of the $s$-th selected sample. $\pi(s)\neq \pi(s')$, if $s\neq s'$ and $1\leq s,s'\leq S$. $ 1\leq\pi(s),\forall s \leq 2^K$.
Within the $t$-distribution in \eqref{eq:t-distribution}, each sample is associated with an importance weight:
\begin{equation}
\label{weight}
\begin{aligned}
W_{\mathrm{sub},s}=\frac{P^{( \mathrm{GA} )}\big( \boldsymbol{X}_{\pi(s)} \big)}{\varXi \big( \boldsymbol{X}_{\pi(s)} \big)}, s=  1,\cdots,S.
\end{aligned}
\end{equation}
Given the proposal
distribution $\varXi \left( \boldsymbol{X}_{\pi(s)} \right)$, the importance weight $W_{\mathrm{sub},s}$ indicates the importance of each sample $\boldsymbol{X}_{\pi(s)}$ for the approximated target function $P^{\left( \mathrm{GA} \right)}\left( \boldsymbol{X}_{\pi(s)} \right)$.

The importance weights $W_{\mathrm{sub},s}$ may exhibit numerical instability when the proposal distribution $\varXi(\cdot)$ underestimates the target density.
We therefore normalize the weights:  
\begin{equation}
\label{weight_nom}
\begin{aligned}
\hat{W}_{\mathrm{sub},s}=\frac{W_{\mathrm{sub},s}}{\sum_{s=1}^S{W_{\mathrm{sub},s}}}, s= 1,\cdots,S.
\end{aligned}
\end{equation}
This process ensures a normalized distribution of importance across all samples.

To update the parameters of the iterative sampling process, including the sample set $\boldsymbol{\varPhi }_{\mathrm{sub}}$, the weight set $\boldsymbol{W}_{\mathrm{sub}}$, and the proposal distribution $\varXi \left( \boldsymbol{X}_{\pi(s)}\right)$, we compute the variation~$r$ in the weights between the current and previous iterations using $r=\sum_{s=1}^S{\left| W_{\mathrm{sub},s}^{\left( z\right)}-W_{\mathrm{sub},s}^{\left( z-1 \right)} \right|}$, where $z$ represents the number of iterations in the adaptive importance sampling process. If $r$ falls below the threshold $r_{\rm thre}$, the sampling process converges, resulting in the set of samples, $\boldsymbol{\varPhi }_{\mathrm{sub}}=\left[ \boldsymbol{\phi}_{\mathrm{sub},1},\cdots,\boldsymbol{\phi}_{\mathrm{sub},S} \right] $, along with their corresponding weights $\boldsymbol{W}_{\mathrm{sub}}=\left[ W_{\mathrm{sub},1},\cdots,W_{\mathrm{sub},S} \right] $. Based on the sample set and the weight set, we obtain the sampling result $\varPhi_{\mathrm{sub},s}= \boldsymbol{\phi}_{\mathrm{sub},s}\cdot W_{\mathrm{sub},s}$, $\forall s= 1,\cdots,S $.
 
We can substitute the final sample set $\left[\varPhi_{\mathrm{sub},1},\cdots,\varPhi_{\mathrm{sub},S} \right] $ into \eqref{eq:SumRate_SCA}. Then, the surrogate concave function of $I_{\mathrm{sum}}^{\left( \mathrm{GA} \right)}$ in \eqref{eq:SumRate_SCA} can be rewritten as
\begin{align}\label{eq:SumRate_sub}
&\tilde{I}_{\mathrm{sum,SCA},\mathrm{sub}}^{( \mathrm{GA} ),( t )}=K+\frac{1}{2^K\ln 2}\cdot \big[\varphi _{\mathrm{SCA}}^{( t )}\big( \mu _{\mathrm{sub}},\mu _{\mathrm{sub}}^{( t-1 )} \big) -\big.\nonumber \\	
&\big.\quad\frac{f_{\mathrm{sub}}\big( \boldsymbol{\omega }_{\mathrm{low},\mathrm{sub}} \big) +f_{\mathrm{SCA},\mathrm{sub}}\big( \boldsymbol{\omega }_{\mathrm{up},\mathrm{sub}},{\upsilon},{\upsilon^{( t-1 )}} \big)}{2}\big], 
\end{align}
where $f_{\mathrm{sub}}\left( \boldsymbol{\omega }_{\mathrm{low,sub}} \right)$ and $f_{\mathrm{SCA},\mathrm{sub}}\left( \boldsymbol{\omega }_{\mathrm{up,sub}},{\upsilon,}\upsilon^{\left( t-1 \right)} \right)$ can be obtained by 
replacing 
$\left[ \varPhi_1,\cdots,\varPhi_{2^K} \right]$ with $\left[ \varPhi_{\mathrm{sub},1},\cdots,\varPhi_{\mathrm{sub},S} \right]$ in \eqref{w_low} and \eqref{w_up}, and then substituting the resulting $\boldsymbol{\omega}_{\mathrm{low,sub}}$ and $\boldsymbol{\omega}_{\mathrm{up,sub}}$ into \eqref{eq:SumRate_low} and \eqref{eq:SumRate_up_SCA} with the summation bounds updated accordingly. 
Moreover,
\begin{equation}
\begin{aligned}
\mu _{\mathrm{sub}} =\big( 2\pi \big) ^{S}n_b\prod\nolimits_{s=1}^S{\big( \eta\big| \varPhi _{\mathrm{sub,}s} \big|^2+n_b \big)};
\end{aligned}
\end{equation}
\begin{equation}
\begin{aligned}
\mu _{\mathrm{sub}}^{(t-1)} =\big( 2\pi \big) ^{S}n_b\prod\nolimits_{s=1}^S{\big( \eta\big| \varPhi _{\mathrm{sub},s}^{(t-1)} \big|^2+n_b \big)}.
\end{aligned}
\end{equation}
As a consequence, Problem \textbf{P2} can be further rewritten as
\begin{subequations}
    \begin{align}
        \mathbf{P3}: & \max_{|\phi_1|^2, \cdots, |\phi_K|^2, \upsilon,\mu_{\mathrm{sub}},}\tilde{I}_{\mathrm{sum},\mathrm{SCA},\mathrm{sub}}^{( \mathrm{GA} ) ,( t )}\label{objective_u3}
        \\
        & \mathrm{s.t.} 
        \quad\mu_{\mathrm{sub}} \geqslant \big( 2\pi \big) ^{S}n_b\prod\nolimits_{s=1}^S{\big( \eta\left| \varPhi_{\mathrm{sub},s} \right|^2+n_b \big)}. \label{constraint_u3}       \\
        &\quad\quad\quad \eqref{eq:OriginalProblem,TotalCst}, \eqref{eq:OriginalProblem,ZeroCst},\eqref{constraint_v},\notag
    \end{align}
\end{subequations}
Problem \textbf{P3} maximizes a concave function over a closed convex interval specified by constraints \eqref{eq:OriginalProblem,TotalCst}, 
 \eqref{eq:OriginalProblem,ZeroCst}, \eqref{constraint_v} and \eqref{constraint_u3}. As can be analyzed in the same way as Problem \textbf{P2}, Problem \textbf{P3} is a convex optimization problem. 

\subsection{Convergence and Complexity Analysis}
\begin{algorithm}[t]
\caption{Low-Complexity Algorithm With Adaptive Importance Sampling}\label{alg:low-complexity}
\begin{algorithmic} 
\item {\textbf{1: Input:}$~n_b,|\phi_\mathrm{\max ,BS}|^2,\eta, h_k, T_{\max},\varepsilon _{\mathrm{SCA}},r_{\rm thre},$}
\item \hspace{1.4cm} $|\phi_{\max ,k}|^2,\forall k=1\cdots,K;$
\item {\textbf{2: Initialization:}}
\item \hspace{0.35cm} $~|{\phi_k^{\left( 0\right)}}|^2,\forall k=1\cdots,K;~\varXi \left( \boldsymbol{X}_{\pi(s)} \right), \forall s=1\cdots,S;$
\item \hspace{0.5cm}$ \boldsymbol{W}_{\mathrm{sub}};~\boldsymbol{\varPhi }_{\mathrm{sub}};~r;~z=0;~t=0$;
\item $\textbf{3: while }{\tilde{I}_{\mathrm{sum},\mathrm{SCA},\mathrm{sub}}^{\left( \mathrm{GA} \right) ,\left( t \right)}-\tilde{I}_{\mathrm{sum},\mathrm{SCA},\mathrm{sub}}^{\left( \mathrm{GA} \right) ,\left( t-1\right)} }>\varepsilon _{\mathrm{SCA}}$ or
\item \hspace{1cm}$ ~~~~t<T_{\max};$ \textbf{do}
\item {\textbf{4:} \hspace{0.5cm}\textbf{while}$~r\geqslant r_{\rm thre}$}\textbf{~do}
\item {\textbf{5:}\hspace{0.6cm}\text{Resampling based on}
\text{~$\boldsymbol{W}_{\mathrm{sub}}$ to obtain 
$~\boldsymbol{\varPhi }_{\mathrm{sub}}$};
\item $ \textbf{6:} \hspace{0.6cm}\text{Calculate} $
$~W_{\mathrm{sub},s}=\frac{P^{\left( \mathrm{GA} \right)}\left( \boldsymbol{X}_{\pi(s)} \right)}{\varXi \left( \boldsymbol{X}_{\pi(s)} \right)},~\forall s=1,\cdots,S$;
\item $\textbf{7:} \hspace{0.6cm}\text{Calculate}$
$r=\sum_{s=1}^S{\left| W_{\mathrm{sub},s}^{\left( z \right)}-W_{\mathrm{sub},s}^{\left( z-1 \right)} \right|}$
\item $\textbf{8:} \hspace{0.6cm}\text{Normalize the sample weights:}$
\item \hspace{1.5cm}$\hat{W}_{\mathrm{sub},s}=\frac{W_{\mathrm{sub},s}}{\sum_{s=1}^S{W_{\mathrm{sub},s}}},~\forall s=1,\cdots,S$;
\item $\textbf{9:} \hspace{0.4cm}\text{~~Update $ \boldsymbol{W }_{\mathrm{sub}}$; $~\boldsymbol{\varPhi }_{\mathrm{sub}}$; $~\varXi \left( \boldsymbol{X}_{\pi(s)} \right),\forall s=1\cdots,S$; }$
\item \textbf{10:} \hspace{0.4cm}$~z=z+1$;
\item $ \textbf{11:}\hspace{0.5cm}\textbf{end while}$;
\item {\textbf{12:}\text{~Employ the CVX toolbox to solve  Problem \textbf{P3};}
\item {\textbf{13:}\text{~$t=t+1$;}
\item $ \textbf{14: end while;}$
\item \textbf{15: Output:}$~|{\phi_k}|^2 =|{\phi_k^{\left( t\right)}}|^2,\forall k=1\cdots,K;$
}}}
\end{algorithmic}
\end{algorithm}

\textbf{Algorithm \ref{alg:low-complexity}} summarizes the proposed low-complexity algorithm developed in Section~\ref{sec:Algorithm_A}, which solves Problem \textbf{P3} iteratively from Step 3 to Step 14. Specifically, Steps 4 through 11 employ the adaptive importance sampling method to generate a sample set $\boldsymbol{\varPhi }_{\mathrm{sub}}$, and its corresponding weight set $\boldsymbol{W}_{\mathrm{sub}}$. In Step 12, we utilize the CVX toolbox to solve  Problem \textbf{P3} based on the sample set and weight set in each SCA iteration.  
With reference to Section \ref{Convergence Analysis: Algorithm 1}, the solution obtained by \textbf{Algorithm}~\ref{alg:low-complexity} converges to a solution that satisfies the KKT conditions of the original Problem~\textbf{P1}, and hence achieves local optimality for Problem~\textbf{P1}.

Like \textbf{Algorithm \ref{alg:alg1}}, the resolution of Problem \textbf{P3} through the SCA method and the CVX toolbox (e.g., the interior point method) dominates the complexity of \textbf{Algorithm \ref{alg:low-complexity}} per SCA iteration. Unlike \textbf{Algorithm \ref{alg:alg1}}, \textbf{Algorithm \ref{alg:low-complexity}} evaluates $\mu$ by only considering products of $S$ selected terms, i.e., 
$\prod\nolimits_{s=1}^S\big( \eta\big| \varPhi _{\mathrm{sub,}s} \big|^2+n_b \big)$ in \eqref{objective_u3} and \eqref{constraint_u3}.
 The complexity of \textbf{Algorithm~\ref{alg:low-complexity}} is $\mathcal{O} \left( S \cdot K^{4.5} \log \left( 1/\varepsilon _{\mathrm{SCA}} \right) \right)$. Considering $S\ll 2^K$, \textbf{Algorithm \ref{alg:low-complexity}} significantly alleviates the computational demands for power allocation in scenarios with a medium-to-large number of users, compared to \textbf{Algorithm~\ref{alg:alg1}}. 

\section{Simulations and Numerical Results}
\label{sec:Simulation}

Extensive simulations are conducted to evaluate the proposed coherent-state power allocation algorithms: \textbf{Algorithms~\ref{alg:alg1}} and \textbf{\ref{alg:low-complexity}}. The effects of atmospheric turbulence, background noise, and non-unit quantum efficiency are considered. 
The coherent states in atmospheric turbulence fluctuate, following the log-normal distribution due to scintillation phenomena \cite{quantum_log}. The PDF of the turbulence is 
$\mathrm{Pr}\big( h_{k,t} \big) =\frac{1}{h_{k,t}\sqrt{2{\pi \sigma _x}^2}}\exp \big[ -\frac{1}{2{\sigma _x}^2}\big( \ln h_{k,t}+\frac{{\sigma _x}^2}{2} \big) ^2 \big] $, where $\sigma _x$ is the intensity of turbulence \cite{quantum_log} and $h_{k,l}=\frac{1}{{d_k}^2}\left( \frac{\pi D_TD_R}{2\nu} \right) ^2$. $d_k$ is the distance between the BS and user $k$. $\nu$ is the wavelength. $D_T$ is the transmitter aperture diameter. $D_R$ is the receiver aperture diameter~\cite{Binary}. Unless otherwise stated, simulation parameters are provided in Table~\ref{tab:simulation}. 

Unlike classical AWGN channels, the conventional signal-to-noise ratio (SNR) is not a directly applicable metric in our NOMA-OQC system due to the fundamental differences, as summarized in Table I. The background noise \(n_b\) is a fixed parameter, and all benchmarks are evaluated under the identical channel conditions as our proposed \textbf{Algorithms 1} and~\textbf{2}, as listed in Table III. In our simulations, we vary the maximum receiving power \(|\phi_{\max,\text{BS}}|^2\) to evaluate the system performance across various communication scenarios. 

\begin{table}[t]
  \centering
  \renewcommand{\arraystretch}{1}
  \caption{Simulation Parameters of the System}
  \label{tab:simulation}
  \begin{tabular}{p{0.1\linewidth}p{0.48\linewidth}c}
    \hline
    \textbf{Notation} & \textbf{Description} & \textbf{Value}\\
    \hline
    $K$  & Total number of users & $16$\\
    $n_b$  & Background noise & $1.7$ \cite{backgroundnoise2}\\
    $\eta$  & Transmission coefficient of the lossy photon channel & $0.9$ \cite{lossy} \\
    $\nu$ & Transmission wavelength & $1550$ nm \cite{quantum_log}\\
    $D_T$& Transmitter aperture diameter & $10$ cm\\
    $D_R$& Receiver aperture diameter & $1$ m\\
    $\sigma _x$& Turbulence intensity & $0.3\sim0.5$ \\
    $d_k$& Distance between the BS and user $k$& $50\sim150$ m\\
    $\theta$ & Degrees-of-freedom of $t$-distribution & $1\sim 10$ \\
    $X_k$ & BPSK signal of the $k$-th user & $\left\{ -1,1 \right\}$ \\
    $\mathrm{Pr}\left( X_k
    \right)$ & The \textit{a priori} probability of user $k$ transmitting the signal & $\frac{1}{2}$ \\
    \hline
  \end{tabular}
\end{table}

To the best of our knowledge, no existing work is directly comparable with \textbf{Algorithms \ref{alg:alg1}} and~\textbf{\ref{alg:low-complexity}} due to our study of the new uplink NOMA-OQC  systems. We use the following baselines to serve as the benchmarks for the proposed algorithms: 
\begin{itemize}
\item {\em Orthogonal Multiple Access (OMA)}: This algorithm allocates orthogonal resources to different users to avoid multi-user interference. Each user transmits independently within their resources. The power allocation for the \(k\)-th user is based on the channel gain: \(|\phi_{k}|^2 = |\phi_{\max,\mathrm{BS}}|^2 \cdot \frac{h_k}{\sum_{i=1}^K h_i}\) \cite{OMA}. This approach offers a performance baseline for the proposed algorithms.

\item {\em Equal average number of photons (ENP)}: This algorithm assigns each user an equal average number of photons in their coherent states, i.e., $\left| \phi_1 \right|^2=\cdots = \left| \phi_K \right|^2$. It serves as a standard and implementation-friendly power allocation strategy \cite{ENP1}. 

\item {\em Interference assimilation (IA)}: 
When determining the data rate for the $k$-th user, this algorithm treats the signals from the other $(K-1)$ users as signal-independent
noise \cite{OMA1}. This method can be viewed as the benchmark for the proposed algorithms in solving Problem \textbf{P2} and \textbf{P3}. 
\end{itemize}

\begin{figure}
\centering
\includegraphics[width=3.5in]{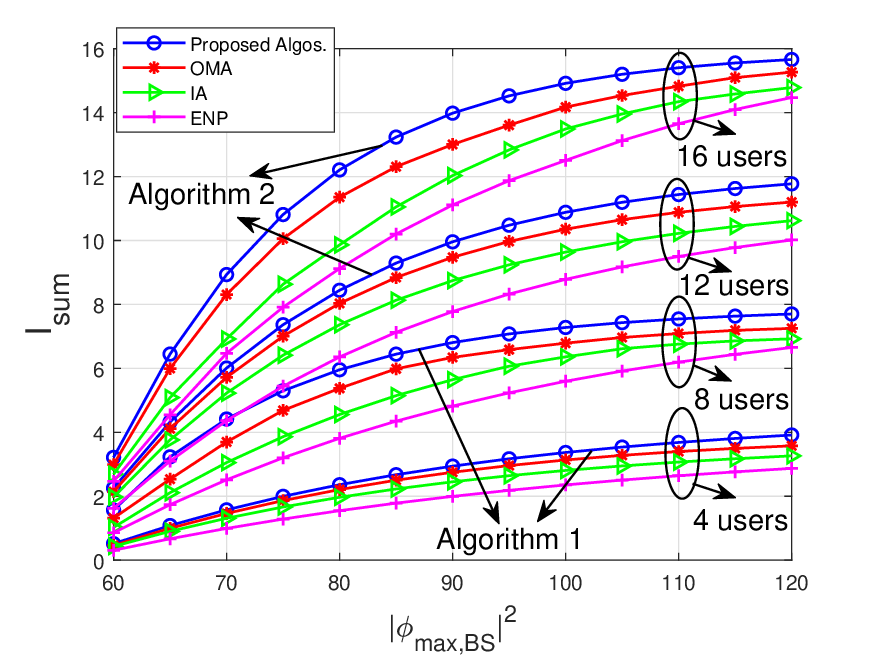}
\caption{Comparison of sum-rate between Algorithms 1/2, OMA, IA, and~ENP.}
\label{Comparison}
\end{figure}

Fig.~\ref{Comparison} evaluates \textbf{Algorithms~\ref{alg:alg1}} and \textbf{\ref{alg:low-complexity}} in comparison with the three benchmarks, where \textbf{Algorithm \ref{alg:alg1}} is employed when there are fewer users ($K\leqslant 8$) and \textbf{Algorithm \ref{alg:low-complexity}} is implemented when there are more users ($K>8$). \textbf{Algorithms~\ref{alg:alg1}} and \textbf{\ref{alg:low-complexity}} exhibit the fastest sum-rate growth among all schemes and achieve significant performance gains as $|\phi_{\max ,\mathrm{BS}}|^2$ increases. This indicates that, compared to OMA, IA, and ENP, \textbf{Algorithms~\ref{alg:alg1}} and \textbf{\ref{alg:low-complexity}} can effectively capture the impact of signal and noise through the asymptotically tight approximation of the sum-rate in \eqref{eq:Asymptotic_Sumrate}, which captures the signal characteristics after canceling the interference from the previous $(k-1)$ users when evaluating the data rate of the $k$-th user.

\begin{figure}[t]
\centering
\includegraphics[width=3.5in]{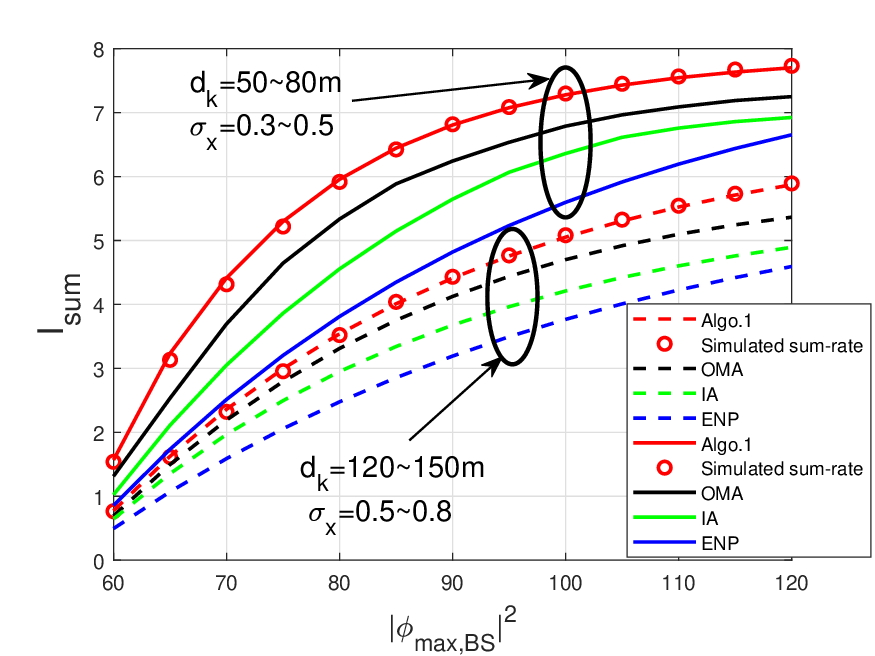}
\caption{Comparison of sum-rates for 8 users under different path loss and atmospheric turbulence scenarios.}
\label{fig:8_user_distance_turbulence}
\end{figure}

Fig. \ref{fig:8_user_distance_turbulence} evaluates \textbf{Algorithm 1}, compared to three benchmark schemes in an 8-user system with users situated at 50 to 150 m away from the BS. Each user experiences unique path loss characteristics based on distance, along with varying atmospheric turbulence conditions. We consider two distinct communication scenarios: Shorter-distance communications at 50 to 80 m under moderate turbulence conditions with the turbulence intensity $\sigma _x$ ranging from 0.3 to 0.5, and longer-distance communications at 120 to 150 m under stronger turbulence with $\sigma _x$ ranging from 0.5 to 0.8. 
     
As shown in  Fig. \ref{fig:8_user_distance_turbulence}, all four schemes demonstrate improved sum-rate performance as $|\phi_{\max ,\mathrm{BS}}|^2$ increases, with \textbf{Algorithm 1} exhibiting the highest sum-rate and the fastest growth rate. This superior performance demonstrates the enhanced capability of \textbf{Algorithm 1} to handle dynamic OQC factors.
The performance gains are attributed to the optimized power allocation that effectively compensates for channel variations and robust interference management, making it effective in scenarios with diverse user distributions and significant channel condition fluctuations. Also, the asymptotic approximation coincides with the simulation results, validating~\eqref{eq:Asymptotic_Sumrate}.

\begin{figure}
\centering
\includegraphics[width=3.5in]{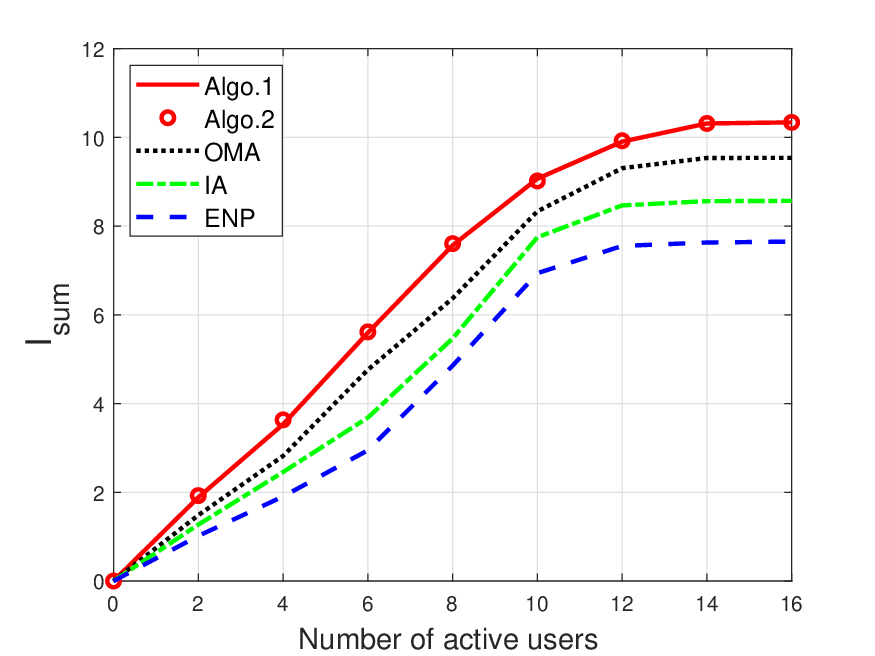}
\caption{Comparison of the sum-rate change with increasing number of users in the considered NOMA cluster.}
\label{number of active users.png}
\end{figure}

\begin{figure}
\centering
\includegraphics[width=3.5in]{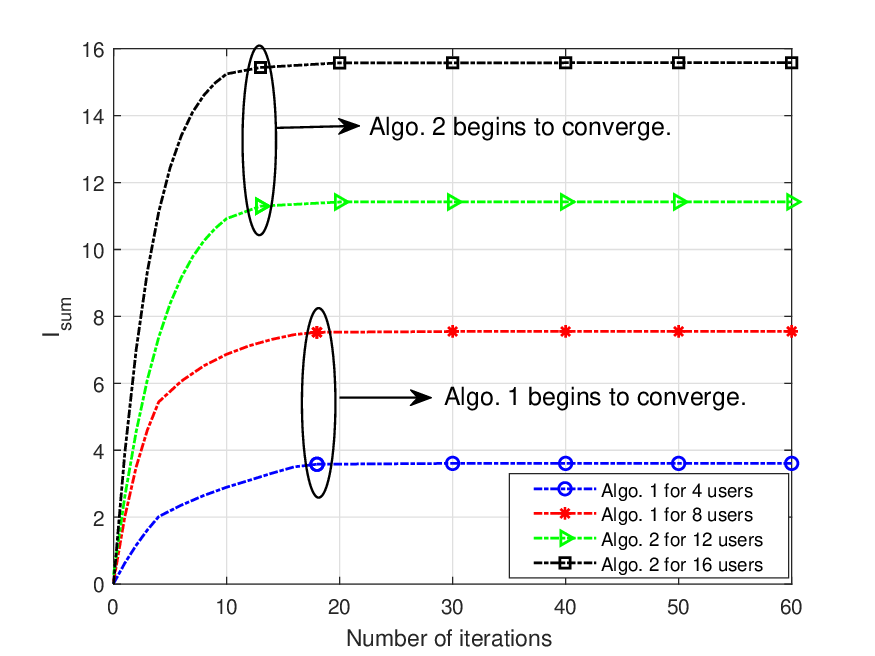}
\caption{ The convergence behaviors of \textbf{Algorithms 1} and \textbf{2}.}
\label{number of iteration}
\end{figure}

Fig.~\ref{number of active users.png} plots the variation of the sum-rate with an increasing number of users. It is evident that both \textbf{Algorithms~\ref{alg:alg1}} and~\textbf{\ref{alg:low-complexity}} substantially outperform the three benchmarks. Moreover, \textbf{Algorithms~\ref{alg:alg1}} and~\textbf{\ref{alg:low-complexity}} display excellent consistency in the sum-rate growth. \textbf{Algorithms~\ref{alg:low-complexity}} provides a polynomial-time alternative to the exponential-time \textbf{Algorithms~\ref{alg:alg1}}, reducing computational complexity substantially.
Fig.~\ref{number of iteration} demonstrates the convergence behaviors of \textbf{Algorithms \ref{alg:alg1}} and~\textbf{\ref{alg:low-complexity}} in systems with 4, 8, 12, or 16 users at $|\phi_{\max ,\mathrm{BS}}|^2=120$. Both algorithms satisfy KKT conditions, \textbf{Algorithm \ref{alg:low-complexity}} achieves faster convergence through its adaptive importance sampling method.

Fig.~\ref{fig:cap_reg} simulates the capacity region for a two-user uplink OQC system, where \( I_1 \) and \( I_2 \) denote the achievable rates of users 1 and 2, respectively. The region is bounded by \( I_1 < I_{1,\max} \), \( I_2 < I_{2,\max} \), and \( I_1 + I_2 < I_{\mathrm{sum}} \). The black line represents the capacity region of NOMA, while the green line represents the capacity region of OMA. Point A represents the case where user 2 achieves its maximum rate \( I_{2,\max} \), while user 1 achieves a rate of \( I_{1,\Delta}\). Point B represents the case where user 1 achieves its maximum rate \( I_{1,\max} \), while user 2 achieves a rate of \( I_{2,\Delta}\). The line segment AB represents the maximum sum-rate achievable by both users through different power allocations in the NOMA system. In the OMA system, the maximum sum rate is achieved at point C, where the rates of users 1 and 2 are lower than those achieved by NOMA at points A and B. Notably, the capacity region of OMA is a subset of that of NOMA, demonstrating the advantage of NOMA over OMA in both individual and sum-rates.

\begin{figure}[t]
\centering
\includegraphics[width=3.5in]{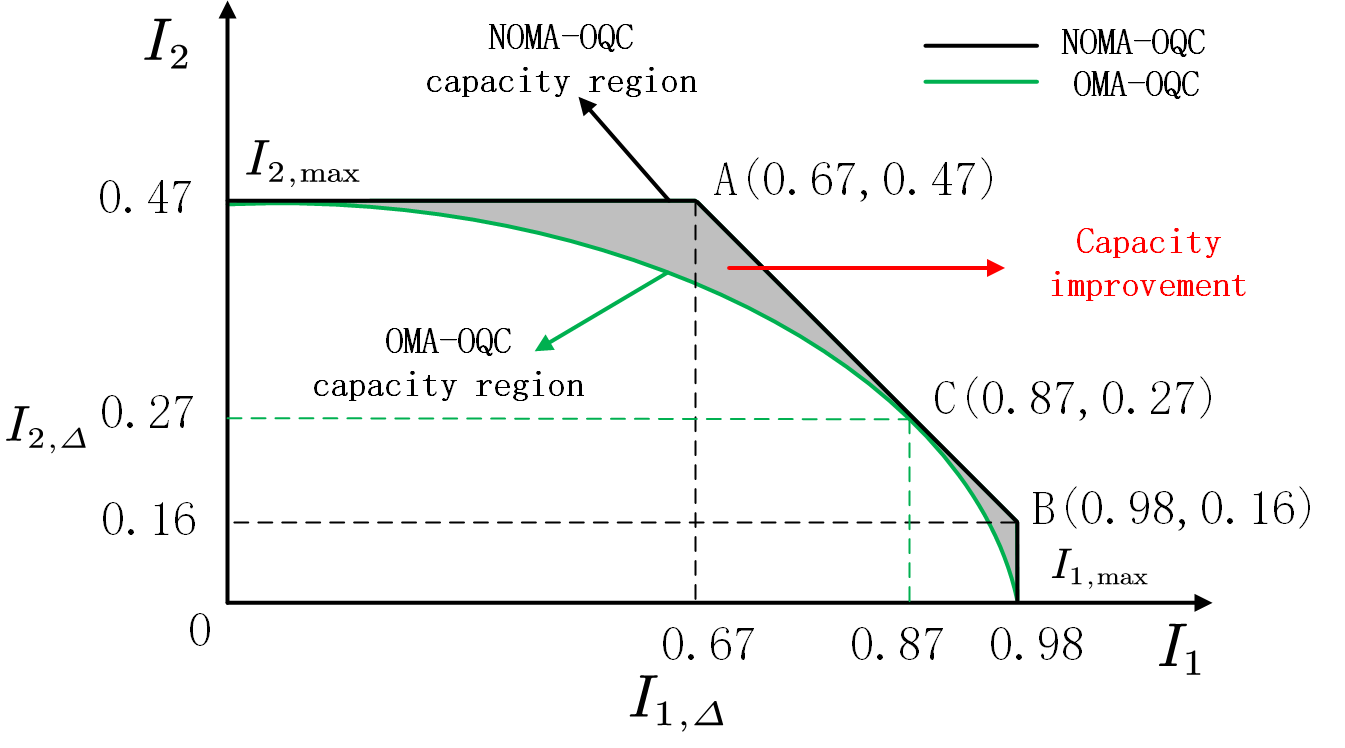}
    \caption{The two-user capacity regions of NOMA and OMA schemes in the proposed OQC communication system.}
\label{fig:cap_reg}
\end{figure}

\begin{figure}[htbp]
\centering
\includegraphics[width=3.5in]{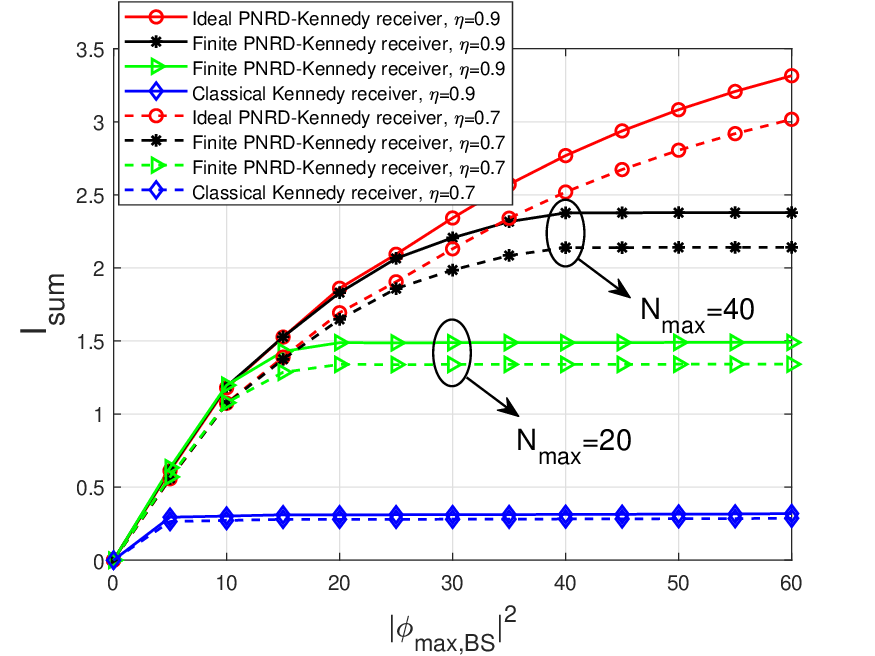}
\caption{Sum-rate comparison of 4-user uplink NOMA-OQC system for the ideal PNRD-Kennedy receiver, the finite PNRD-Kennedy receiver, and the classical Kennedy receiver.}
\label{fig:Different_receiver}
\end{figure}

\begin{figure}[t]
\centering
\includegraphics[width=3.5in]{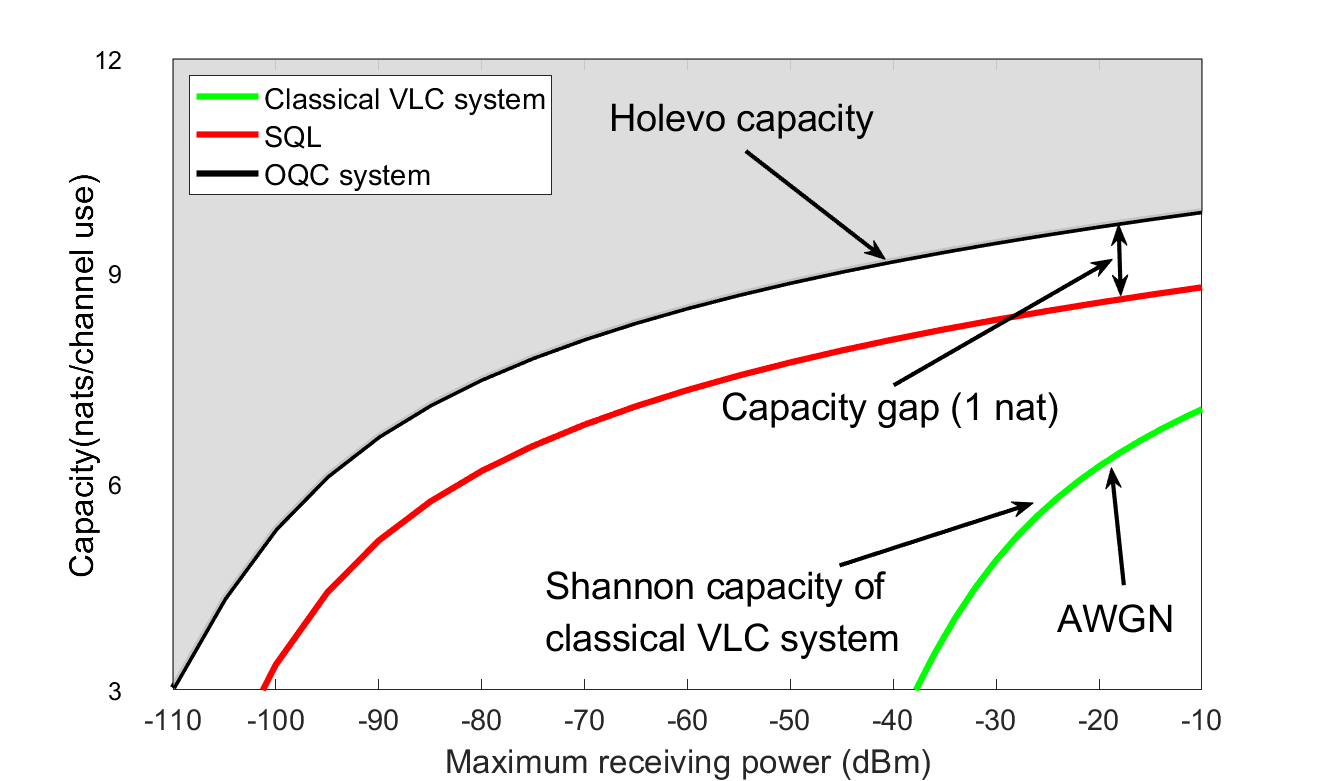}
\caption{Capacity comparison between OQC system, SQL and VLC system.}
\label{fig:Holevo_Shannon}
\end{figure}

Fig.~\ref{fig:Different_receiver} compares the sum-rate performance of three receiver architectures in a 4-user NOMA-OQC scenario: the ideal PNRD-Kennedy receiver, the finite PNRD-Kennedy receiver, and the classical Kennedy receiver. For the finite PNRD-Kennedy receiver, we consider the constraints of \(N_{\max}=25\) and \(N_{\max}=40\), which are representative of the maximum resolvable photon number \cite{Finite-PNRD1}. Furthermore, we evaluate the impact of the transmission coefficient \(\eta\) of the lossy photon channel by taking \(\eta = 0.7\) and \(\eta = 0.9\).
It is observed that the sum-rate of the PNRD-Kennedy receiver increases quickly with the growth of \(|\phi_{\max,\mathrm{BS}}|^2\), demonstrating a substantial advantage over the classical Kennedy receiver. The finite PNRD-Kennedy receiver has sum-rate converge under a lower received power than the ideal PNRD-Kennedy receiver.
Notably, the ideal PNRD-Kennedy receiver maintains superior performance robustness across different transmission coefficients of the lossy photon channel, outperforming traditional detection schemes under various lossy photon channel conditions.

We further compare the capacity regions of the proposed OQC system with both the SQL and a classical visible light communication (VLC) system. The capacity limit of OQC is governed by the Holevo bound $\chi = S\left(\sum_i p_i \rho_i\right) - \sum_i p_i S(\rho_i)$, where $S(\rho) = -\mathrm{tr}(\rho\log_2\rho)$ denotes the von Neumann entropy, with $\rho_i$ being the density operator of the $i$-th input quantum state transmitted with probability $p_i$ \cite{Quantum_Limits}. In contrast, the SQL refers to the performance limit achievable by directly measuring the modulated physical observables of coherent states through conventional homodyne detection, which is characterized by the standard Shannon capacity formula \cite[Eq.(32)]{Quantum_Limits}. The capacity of VLC is determined by the Shannon limit, which represents the classical capacity bound. 

    Fig.~\ref{fig:Holevo_Shannon} confirms the fundamental theoretical prediction \cite{Quantum_Limits} — the asymptotic gap between the Holevo capacity of OQC and the Shannon capacity approaches 1 nat in the high-power regime. For comparison with classical systems, we consider a VLC system operating within an 8 m × 8 m × 3 m room at a wavelength of 550 nm, with performance limited by AWGN of fixed noise power \(P_{\text{noise}} = 10^{-10} \, \text{W}\) (or $-70$ dBm) \cite{VLC1}. 
    Notably, the capacity of OQC system not only surpasses the SQL but also operates in a much lower power regime to achieve comparable performance compared to classical VLC system. 
    In other words, OQC provides higher channel capacity while outperforming the classical Shannon limit.
    Moreover, OQC inherently offers stronger security against eavesdropping, leveraging the no-cloning theorem — a fundamental advantage over classical VLC systems.

    \begin{figure}[t]
\centering
\includegraphics[width=3.5in]{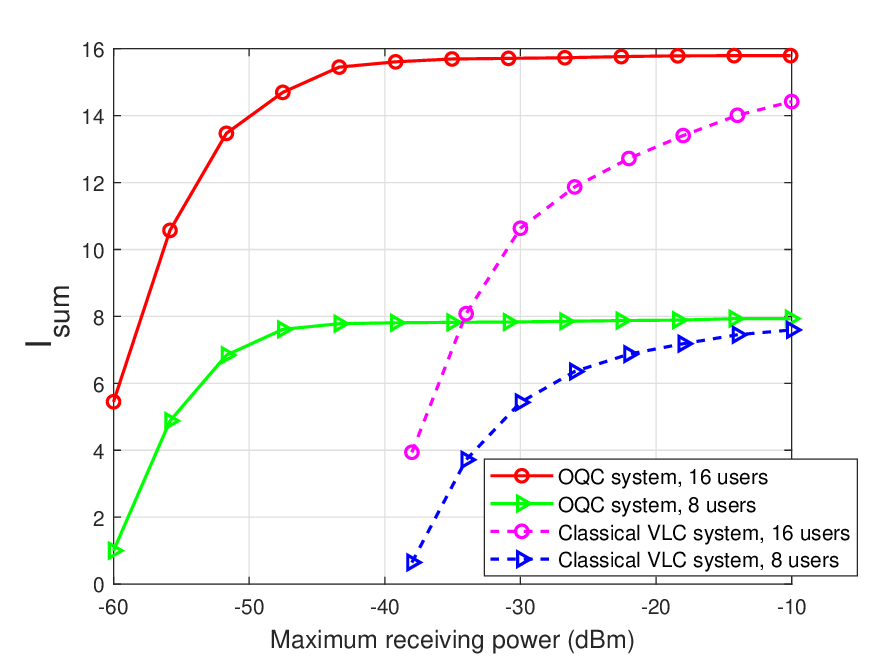}
\caption{Sum-rate comparison between the proposed quantum system and classical VLC system.}
\label{fig:OQC-VLC}
\end{figure}

\begin{figure}[t]
\centering
\includegraphics[width=3.5in]{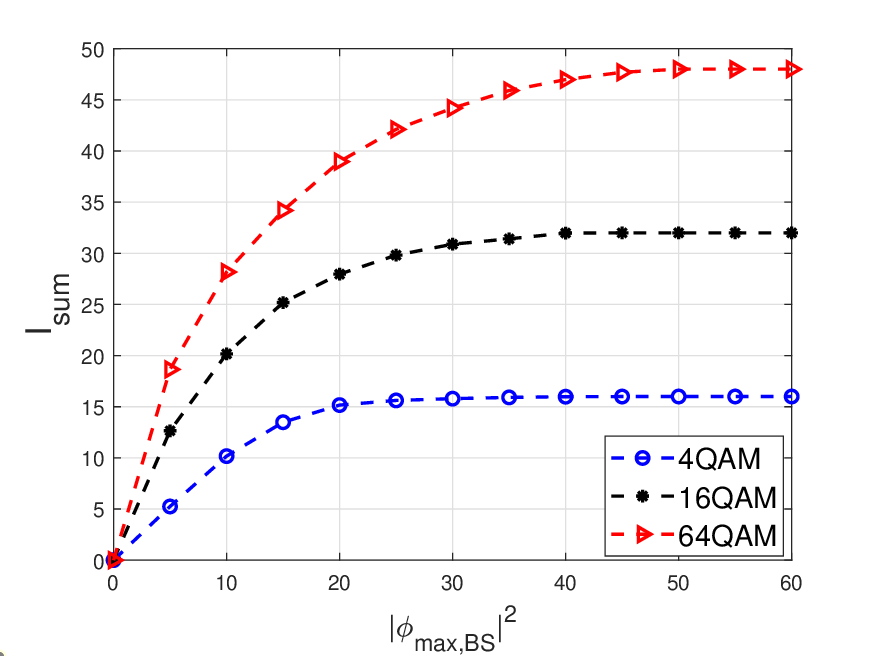}
\caption {Sum-rate comparison of 8-user uplink NOMA-OQC systems, where different QAM modulations are considered and Algorithm 1 is performed for coherent-state power allocation.}
\label{fig:QAM}
\end{figure}

     Fig.~\ref{fig:OQC-VLC} compares the sum-rate of our proposed quantum system with the classical NOMA-based VLC system under 8-user and 16-user communication scenarios. The considered VLC system operates within an indoor environment and employs intensity modulation and direct detection (IM/DD) limited by AWGN with noise power of \(P_{\text{noise}} = 10^{-10} \, \text{W}\) \cite{VLC}. The results demonstrate that our proposed scheme achieves a significantly higher sum-rate under 8-user and 16-user communication scenarios. This performance gain highlights the superior multi-user scalability and spectral efficiency of the NOMA-OQC framework in practical communication systems. 
    
    Last but not least, the proposed algorithm can be extended to higher-order modulations, e.g., quadrature amplitude modulation (QAM), with the sum-rate analyzed in the same way as with BPSK; see Appendix D. Fig.~\ref{fig:QAM} plots numerically the sum-rate for QAM under an 8-user scenario, by varying the maximum effective received power at the receiver. As expected, the sum-rates increase significantly with the modulation order. The sum-rates converge when $|\phi_{\max, \mathrm{BS}}|^2>50$.

\section{Conclusion}
\label{sec:Conclusion}
We have analyzed the sum-rate of uplink NOMA-OQC systems using coherent states over a lossy photon channel, considering atmospheric turbulence and non-negligible background noises. We have also presented a new approach to allocating the powers of the coherent states within the system. Specifically, we have derived the asymptotic sum-rate of the system, with which we optimized the average number of photons (or powers) of the coherent states emitted by the users. Using variable substitution and SCA, a new power allocation algorithm has been developed to iteratively convexify and maximize the asymptotic sum-rate. 
We have further reduced the complexity of the algorithm using adaptive importance sampling, to cater to a medium-to-large number of users. Simulations have demonstrated that the proposed algorithms significantly enhance the sum-rate of the uplink NOMA-OQC systems by over 20\%, compared to their potential alternatives. 
Future work will explore advanced receiver designs, such as the Dolinar receiver with feedforward detection, to further approach the Helstrom bound for multi-user coherent-state communications.

\appendices
\section{Proof of Theorem 1}

By the definition of Shannon entropy, we derive the 
entropy $H(Y)$ and the conditional entropy $H\left(Y|X_1,\cdots,X_K\right)$, as given in \eqref{eq:SumRate,2}. Specifically,
\begin{equation}
\label{entropy}
\begin{aligned}
&H\big(Y\big) =-\sum\limits_{y=0}^{+\infty}{\Big\{{\sum_{i=1}^{2^K}\Pr\big(\boldsymbol{X}_i\big)}\Pr\big(Y|\boldsymbol{X}_i\big)\big.} \\
&\quad\quad\quad\quad\quad\quad\quad\big. \times \log_2\big[\sum_{i=1}^{2^K}\Pr\big(\boldsymbol{X}_i\big)\Pr\big(Y|\boldsymbol{X}_i\big)\big]\Big\},
\end{aligned}
\end{equation}
\begin{equation}
\label{conditional_entropy}
\begin{aligned}
&H\left(Y|X_1,\cdots,X_K\right)=-\sum\limits_{y=0}^{+\infty}\sum_{i=1}^{2^K}\Pr\big(\boldsymbol{X}_i\big)\Pr\big(Y|\boldsymbol{X}_i\big)
\\
&
\quad\quad\quad\quad\quad\quad\quad\quad\quad\quad\quad \times \log_2\big[\Pr\big(Y|\boldsymbol{X}_i\big)\big],
\end{aligned}
\end{equation}
where $\boldsymbol{X}_i, \forall i=1,\cdots, 2^K$ enumerate all $2^K$ possibilities of $\boldsymbol{X}=\left[ X_1,\cdots,X_K \right] ^T$ with $X_k\in\{-1,1\}$ being the BPSK signal of user $k, \forall k=1,\cdots,K$.

By substituting \eqref{entropy}, \eqref{conditional_entropy} and \eqref{eq:basePr_group} into \eqref{eq:SumRate,2}, we have the sum-rate in \eqref{eq:SumRate_appendix}. In particular, leveraging the definitions of entropy and conditional entropy yields \eqref{eq:SumRate_appendix_a}. 
By substituting \eqref{eq:basePr_group} into \eqref{eq:SumRate_appendix_a} and then simplifying, we arrive at \eqref{eq:SumRate_appendix_d}, where $\varPhi_i, \forall i=1,\cdots, 2^K$ enumerate all $2^K$ possibilities of $|\varPhi \rangle =|\sum_{k=1}^K{\phi _k}\rangle$ with $\phi _k$ being the received coherent state of user $k$ corresponding to BPSK signal $X_k$, $\forall k=1,\cdots,K$.
\begin{figure*}[ht]
\begin{subequations}
\label{eq:SumRate_appendix}
\begin{flalign}
I_{\mathrm{sum}}&=-\sum\limits_{y=0}^{+\infty}{\bigg\{{\sum_{i=1}^{2^K}\Pr\big(\boldsymbol{X}_i\big)}\Pr\big(Y|\boldsymbol{X}_i\big)\log_2\Big[\sum_{i=1}^{2^K}\Pr\big(\boldsymbol{X}_i\big)\Pr\big(Y|\boldsymbol{X}_i\big)\Big]\bigg\}}+\sum\limits_{y=0}^{+\infty}\sum_{i=1}^{2^K}{\bigg\{\Pr\big(\boldsymbol{X}_i\big)\Pr\big(Y|\boldsymbol{X}_i\big)\log_2\big[\Pr\big(Y|\boldsymbol{X}_i\big)\big]\bigg\}}\label{eq:SumRate_appendix_a}\\ 
&=K + \frac{1}{2^K\ln 2}\cdot\sum_{y=0}^{+\infty} \bigg\{
\sum_{i=1}^{2^K} \Big[ \frac{\big( \eta\big| \varPhi_i \big|^2+n_b \big)^y}{y!} \exp \big( -\eta\big| \varPhi_i \big|^2-n_b \big) \cdot\ln \frac{\big( \eta\big| \varPhi_i \big|^2+n_b \big)^y}{y!} \exp \big( -\eta\big| \varPhi_i \big|^2-n_b \big) \Big] \big.\notag\\
&\big.\quad\quad\quad\quad\quad\quad- \sum_{i=1}^{2^K} \frac{\big( \eta\big| \varPhi_i \big|^2+n_b \big)^y}{y!} \exp \big( -\eta\big| \varPhi_i \big|^2-n_b \big)\cdot\ln \Big[ \sum_{i=1}^{2^K} \frac{\big( \eta\big| \varPhi_i \big|^2+n_b \big)^y}{y!} \exp \big( -\eta\big| \varPhi_i \big|^2-n_b \big) \Big]
\bigg\}.
\label{eq:SumRate_appendix_d}
\end{flalign}
\end{subequations}
\hrulefill
\end{figure*}

\section{Proof of Lemma 1}

According to \cite{Yongkang2} and \cite{Yongkang}, the probability mass function of the Poisson distribution, i.e., \eqref{eq:basePr_group}, can be approximated by a PDF of the Gaussian distribution as $n_b \rightarrow +\infty$. Following  \cite[eq. (21)]{Yongkang}, we have
\begin{equation}
\label{eq:GA}
\begin{aligned}
P( \boldsymbol{X} ) \xrightarrow{n_b\rightarrow \infty}P^{( \mathrm{GA} )}( \boldsymbol{X} ) =\frac{\exp \Big[ -\frac{\big( y-\displaystyle \eta\big| \varPhi \big|^2-n_b \big) ^2}{2\big( \displaystyle \eta\big| \varPhi \big|^2+n_b \big)} \Big]}{\sqrt{2\pi \big( \displaystyle \eta\big| \varPhi \big|^2+n_b \big)}},
\end{aligned}
\end{equation}
where the superscript ``$^{( \mathrm{GA})}$'' represents “Gaussian approximation”. As a result, we can have
 \begin{equation}
\label{deqn_ex1}
\begin{aligned}
&I_\mathrm{sum}\xrightarrow{n_b\rightarrow +\infty}\tilde{I}_\mathrm{sum}^{( \mathrm{GA} )}\\
&=K+\frac{1}{2^K\ln 2} \cdot\bigg\{ \int\limits_0^{\infty}{\Big[ \sum_{i=1}^{2^K}{P^{( \mathrm{GA} )}\big( \boldsymbol{X}_i \big) \ln P^{( \mathrm{GA} )}\big( \boldsymbol{X}_i \big)} \Big] dy}\big.\\	
&\big.\quad-\int\limits_0^{\infty}{ \sum_{i=1}^{2^K}{P^{( \mathrm{GA} )}\big( \boldsymbol{X}_i \big)} \cdot \ln \Big[ \sum_{i=1}^{2^K}{P^{( \mathrm{GA} )}\big( \boldsymbol{X}_i \big)} \Big] dy}\bigg\}.
\end{aligned}
\end{equation}
Next, we integrate the RHS of \eqref{deqn_ex1}. For illustration convenience, we define part of \eqref{deqn_ex1} as
\begin{align}
\label{eq:varphi_function}
\varphi \big( \boldsymbol{\varPhi } \big) \stackrel{\triangle}{=}&\int\limits_0^{\infty}{\Big[ \sum_{i=1}^{2^K}{P^{( \mathrm{GA})}\big( \boldsymbol{X}_i \big) \ln P^{( \mathrm{GA} )}\big( \boldsymbol{X}_i \big) \Big]} dy};\\
\label{eq:f_omega}
f\big( \boldsymbol{\omega } \big) \stackrel{\triangle}{=}&\int\limits_0^{\infty}{ \sum_{i=1}^{2^K}{P^{( \mathrm{GA} )}\big( \boldsymbol{X}_i \big)}\cdot \ln \Big[ \sum_{i=1}^{2^K}{P^{( \mathrm{GA} )}\big( \boldsymbol{X}_i \big)} \Big] dy},
\end{align}
where $\boldsymbol{\omega }=\big[ \omega_1,\cdots,\omega_{2^K} \big) ]$, as elaborated in detail in the subsequent paragraphs.

 We rewrite \eqref{eq:varphi_function} as 
\begin{align}
&\varphi \big( \boldsymbol{\varPhi } \big) =\int\limits_0^{\infty}\bigg\{ 
\frac{\exp \Big[ -\frac{\big( y-n_b \big) ^2}{2n_b} \Big]}{\sqrt{2\pi n_b}}\ln \frac{\exp \Big[ -\frac{\big( y-n_b \big) ^2}{2n_b} \Big]}{\sqrt{2\pi n_b}} +\cdots + \big.\nonumber \\	
&\big. \frac{\exp \Big[ -\frac{\big( y-\eta\big| \varPhi_{2^K} \big|^2-n_b \big) ^2}{2\big( \eta\big| \varPhi_{2^K} \big|^2+n_b \big)} \Big]}{\sqrt{2\pi \big( \eta\big| \varPhi_{2^K} \big|^2+n_b \big)}}\ln \frac{\exp \Big[ -\frac{\big( y-\eta\big| \varPhi_{2^K} \big|^2-n_b \big) ^2}{2\big( \eta\big| \varPhi_{2^K} \big|^2+n_b \big)} \Big]}{\sqrt{2\pi \big( \eta\big| \varPhi_{2^K} \big|^2+n_b \big)}} 
\bigg\}dy,\label{deqn_ex1_0}
\end{align}
where there are $2^K$ terms, with the first term rewritten as
\begin{subequations}
\label{deqn_ex1_1}
\begin{flalign}
&\int_0^{\infty}{\frac{\exp \Big[ -\frac{\big( y-n_b \big) ^2}{2n_b} \Big]}{\sqrt{2\pi n _b}}\ln \bigg\{ \frac{\exp \Big[ -\frac{\big( y-n_b \big) ^2}{2n_b} \Big]}{\sqrt{2\pi n _b}} \bigg\}}dy\\
&=-\frac{\ln \big( 2\pi n_b \big)}{2}\int_0^{\infty}{\frac{\exp \big[ -\frac{\big( y-n_b \big) ^2}{2n_b} \big]}{\sqrt{2\pi n_b}}}dy\notag\\
&\quad-\int_0^{\infty}{\frac{\exp \Big[ -\frac{\big( y-n_b \big) ^2}{2n_b} \Big]}{\sqrt{2\pi n_b}}\frac{\big( y-n_b \big) ^2}{2n_b}dy}\\
&=-\frac{\ln \big( 2\pi n_b \big) +1}{2}\Big[ 1-Q\big( \sqrt{n_b} \big) \Big] +\frac{1}{2\sqrt{2\pi}}\sqrt{n_b}\exp \big( -\frac{n_b}{2} \big)\notag\\
&\xrightarrow{n_b\rightarrow +\infty}-\frac{\ln \big( 2\pi n_b \big) +1}{2},
\end{flalign}
\end{subequations}
where \eqref{deqn_ex1_1} follows from the partial integration method. $Q\left( \cdot \right) $ is the Q-function.
By the L’Hospital’s rule, as $n_b\rightarrow +\infty $, $\frac{\ln \left( 2\pi n _b \right) +1}{2}Q\left( \sqrt{n_b} \right) \rightarrow 0$ and $\sqrt{n _b}\exp \left( -\frac{n_b}{2} \right) \rightarrow 0$.

Similarly, we can evaluate the remaining $\big( 2^K-1 \big) $ integral terms in \eqref{deqn_ex1_0}. As a result, \eqref{deqn_ex1_0} can be rewritten as
\begin{equation}
\label{deqn_ex1_}
\varphi \big( \boldsymbol{\varPhi }\big)\! =\!
- \frac{1}{2}  \ln\! \Big[ \big( 2\pi \big) ^{2^K}n_b\prod\nolimits_{i=1}^{2^K}{\big( \eta\big| \varPhi_i \big|^2\!\!+\!n_b \big)} \Big] \!\!-\!2^{K\!-\!1}.
\end{equation}
It is challenging to calculate \eqref{eq:f_omega} due to the involvement of $2^K$ Gaussian PDFs and logarithm operation. Hence, we opt to determine its lower bound, as follows:
    \begin{align}
&f\big( \boldsymbol{\omega }_{\mathrm{low}} \big) 
=\int\limits_0^{\infty}{ \sum_{i=1}^{2^K}{P_{\mathrm{low}}^{( \mathrm{GA} )}\big( \boldsymbol{X}_i \big)} \cdot\ln \Big[ \sum_{i=1}^{2^K}{P_{\mathrm{low}}^{( \mathrm{GA} )}\left( \boldsymbol{X}_i \right)} \Big] dy}\nonumber\\
&=\!\!\int_0^{\infty}\!\!{\bigg\{	
\frac{\exp \Big[ \!-\!\frac{\big( y\!-\!n_b \big) ^2}{2n_b} \Big]}{\sqrt{2\pi n_b}}\!+\!\cdots	\!+\!\frac{\exp \Big[ \!-\!\frac{\big( y\!-\!\eta\big |\varPhi_{2^K}\big|^2\!-\!n_b \big) ^2}{2\big( \eta\big |\varPhi_{2^K}\big|^2\!+\!n_b \big)} \Big]}{\sqrt{2\pi n_b}} \bigg\}}\nonumber\\
&\quad \ln\bigg\{	\frac{\exp \left[ \!-\!\frac{\left( y\!-\!n_b \right) ^2}{2n_b} \right]}{\sqrt{2\pi n_b}}\!+\!\cdots	\!+\!\frac{\exp \left[ \!-\!\frac{\left( y\!-\!\eta\left |\varPhi_{2^K}\right|^2\!-\!n_b \right) ^2}{2\left( \eta\left |\varPhi_{2^K}\right|^2\!+\!n_b \right)} \right]}{\sqrt{2\pi n_b}}\bigg\} dy.
\label{eq_lowbound}
\end{align}
After performing variable substitution 
and mathematical manipulation on \eqref{eq_lowbound}, the lower bound $f\left( \boldsymbol{\omega }_{\mathrm{low}} \right)$ is rewritten as 
\begin{subequations}
    \label{first}
    \begin{flalign}
f&\big( \boldsymbol{\omega }_{\mathrm{low}} \big) 
=\int\limits_0^{\infty}{ \sum_{i=1}^{2^K}{P_{\mathrm{low}}^{( \mathrm{GA} )}\big( \boldsymbol{X}_i \big)} \ln \Big[ \sum_{i=1}^{2^K}{P_{\mathrm{low}}^{( \mathrm{GA} )}\big( \boldsymbol{X}_i \big)} \Big] dy}\\
&\xrightarrow{n _b\rightarrow +\infty}
=\int_{-\infty}^{\infty}{ \sum_{i=1}^{2^K}\exp \Big[ -\big( u_{\mathrm{low}}+\omega _{\mathrm{low},i} \big) ^2 \Big] }\notag\\
&\quad\quad\quad\quad\cdot\ln \bigg\{ \sum_{i=1}^{2^K}\exp \Big[ -\big( u_{\mathrm{low}}+\omega _{\mathrm{low},i} \big) ^2 \Big]  \bigg\} du_{\mathrm{low}}\notag\\
&\quad\quad\quad-\frac{2^{K-1}\ln \big( 2\pi n_b \big)}{\sqrt{2n_b}}, \label{first_c}
\end{flalign}
\end{subequations}
where $u_{\mathrm{low}}=\frac{1}{\sqrt{2n_b}}\big( y-\frac{1}{2}\displaystyle\sum_{i=1}^{2^K}{\eta\big| \varPhi_i \big|^2}-n_b \big)$.

Let $g(\boldsymbol{\omega})$ denote the integral term in \eqref{first_c}, we have 
\begin{subequations}
\label{function1}
\begin{flalign}
g\big( \boldsymbol{\omega } \big) 
&=\int_{-\infty}^{\infty}{ \sum_{i=1}^{2^K}\exp \Big[ -\big( u+\omega _i \big) ^2 \Big]  }\notag\\
&\quad\quad\quad\quad\cdot\ln \bigg\{ \sum_{i=1}^{2^K}\exp \Big[ -\big( u+\omega _i \big) ^2 \Big]  \bigg\} du\label{function2_b}\\
&\geqslant \ln \int_{-\infty}^{\infty}{ \sum_{i=1}^{2^K}\exp \big[ -\big( u+\omega _i \big) ^2 \big]  }\notag\\
&\quad\quad\quad\quad\quad\cdot\bigg\{ \sum_{i=1}^{2^K}\exp \big[ -\big( u+\omega _i \big) ^2 \big] \bigg\} du,\label{function2_c}
\end{flalign}
\end{subequations}
where \eqref{function2_c} is obtained by applying Jensen’s inequality to \eqref{function2_b},
since \eqref{function2_b} is convex.

In \eqref{function2_c}, the product of the two sums of $2^K$ exponential functions results in a total of $2^{2K}$ exponential functions inside the integral. The $k$-th term in \eqref{function2_c} can be expressed as
\begin{equation}
\label{function_g_w}
\begin{aligned}
&\int_{-\infty}^{\infty}{\exp \Big[ -\big( u+\omega _k \big) ^2 \Big] \exp \Big[ -\big( u+\omega _{k+1} \big) ^2 \Big] \,\,}du\\
&=\sqrt{\frac{\pi}{2}}\cdot \exp \Big[ -\big( \frac{\omega _k-\omega _{k+1}}{2} \big) ^2 \Big]. 
\end{aligned}
\end{equation}
Likewise, we derive the remaining $\big (2^{2K}-1\big)$ terms of~\eqref{function2_c}. 

By substituting \eqref{function_g_w} and the corresponding expressions for the remaining $\big (2^{2K}-1\big)$ terms into \eqref{function2_c}, we can obtain
\begin{equation}
\label{function2_d}
\begin{aligned}
&\tilde{g}\left( \boldsymbol{\omega } \right)\! =\!
\ln\! \bigg\{ \!\sqrt{\frac{\pi}{2}}\!\bigg[ {\sum_{i=1}^{2^K}{\sum_{j =1}^{2^K}\!{{\binom{2^K}{i}}\!{ \binom{2^K}{j}}\!\exp \!\Big( \!\frac{{\omega}_{i}\!-\!{\omega }_{j}}{2} \!\Big) ^2}}} \bigg] \!\bigg\}\!+\!\xi,
\end{aligned}
\end{equation}
where $\xi$ is a constant shift applied to the lower bound \eqref{function2_c} to make it an accurate approximation to $g\big( \boldsymbol{\omega } \big)$ in \eqref{function2_b}~\cite{approximate}.

After combining identical exponential terms, there are $\big( 2^{K}+{ \binom{2^K}{j}}\big)$ different exponential terms inside $\ln \left( \cdot \right) $.
Substituting \eqref{function2_d} into \eqref{first}, we obtain the lower bound of $f\left( \boldsymbol{\omega } \right) $, as given in \eqref{eq:SumRate_low}.
Substituting \eqref{deqn_ex1_} and \eqref{eq:SumRate_low} into \eqref{deqn_ex1}, we can obtain the lower bound of the sum-rate, as given in \eqref{Sum_rate_lowbound}. As $n_b\rightarrow +\infty $, we have $\omega _{\mathrm{low,}k}\rightarrow 0$. As a result, the lower bound $I_{\mathrm{low}}^{\left( \mathrm{GA} \right)}$  converges asymptotically to $K+\frac{\varphi \left( \boldsymbol{\varPhi } \right) -f\left( 0 \right)}{2^K\ln 2}$.

\section{Proof of Lemma 2}
By performing variable substitution and mathematical manipulation on \eqref{eq:f_omega}, we have 
\begin{align}\label{second_c}
&f\big( \boldsymbol{\omega }_{\mathrm{up}} \big) 
=\int\limits_0^{\infty}{ \sum_{i=1}^{2^K}{P_{\mathrm{up}}^{( \mathrm{GA} )}\big( \boldsymbol{X}_i \big)}  \ln \Big[ \sum_{i=1}^{2^K}{P_{\mathrm{up}}^{( \mathrm{GA} )}\big( \boldsymbol{X}_i \big)} \Big] dy}
\nonumber
\\
&\xrightarrow{n _b\rightarrow +\infty}
\int_{-\infty}^{\infty}{ \sum_{i=1}^{2^K}\exp \Big[ -\big( u_{\mathrm{up}}+\omega _{\mathrm{up},i} \big) ^2 \Big] }\notag\\
&\quad\quad\quad\quad\cdot\ln \bigg\{ \sum_{i=1}^{2^K}\exp \Big[ -\big( u_{\mathrm{up}}+\omega _{\mathrm{up},i} \big) ^2 \Big] \bigg\} du_{\mathrm{up}}
\nonumber
\\
&\quad\quad-\frac{2^{K-1}\ln \Big[ 2\pi \big(\eta \big| \displaystyle\sum_{k=1}^{K}{\phi_k} \big|^2+n_b \big) \Big]}{\sqrt{2\big(\eta \big| \displaystyle\sum_{k=1}^{K}{\phi_k} \big|^2+n_b \big)}},
\end{align}
where
$u_{\mathrm{up}}=(y-\frac{1}{2}\displaystyle\sum_{i=1}^{2^K}{\eta\left| \varPhi_i \right|^2}-n_b)
\Bigg/
\sqrt{2(\eta | \displaystyle\sum_{k=1}^{K}{\phi_k} |^2+n_b )}$.
Substituting \eqref{function2_d} into \eqref{second_c}, we derive the upper bound of $f\left( \boldsymbol{\omega } \right) $, as given in \eqref{eq:SumRate_up}.
By substituting \eqref{deqn_ex1_} and \eqref{eq:SumRate_up} into \eqref{deqn_ex1}, we obtain the upper bound of the sum-rate, as given in \eqref{Sum_rate_upbound}. As $n_b\rightarrow +\infty $, we have $\omega _{\mathrm{up,}k}\rightarrow 0$. As a result, the upper bound $I_{\mathrm{up}}^{\left( \mathrm{GA} \right)}$ converges asymptotically to $K+\frac{\varphi \left( \boldsymbol{\varPhi } \right) -f\left( 0 \right)}{2^K\ln 2}$.

\section{Extension to Quadrature Amplitude Modulation}

We investigate the extension from BPSK to QAM in uplink OQC systems, derive the sum-rate expression for multi-user QAM with coherent states, and analyze its convergence properties when employing the square root measurement (SRM) technique for detecting the higher-order modulated coherent states \cite{QAM}. The classical information from the users is mapped into a vector $\boldsymbol{X}^{\text{Q}}=\left[X_1^{\text{Q}},\cdots,X_K^{\text{Q}}\right]^T$, where $X_k^{\text{Q}} \in \mathbb{R}$ represents the QAM signal of user $k$. 
These signals are modulated using QAM and transmitted via coherent states.
Assuming equal probability for each symbol, the QAM constellation consists of $M=L^2$ equidistant points arranged in a square grid on the complex plane. Let $\Omega $ denote the set collecting the indices of the QAM modulation symbols:
 \begin{equation}
 \label{eq:index}
    \Omega \!=\!\left\{ -\left( L\!-\!1 \right) \!+\!2\left( j\!-\!1 \right) |j\!=\!1,\!\cdots \!,L \right\} , L\!=\!2,3,\!\cdots\!.
    \end{equation}
According to the principles of quantum mechanics, in a multi-user quantum coherent state communication system, the coherent state modulated by QAM can be expressed as:
 \begin{equation}
 \label{eq:coherent_state}
    |\alpha _{p,q}\rangle =|\alpha \left( p+qi \right) \rangle , p,q\in \Omega.
    \end{equation}

At the receiver, we employ the SRM method to obtain the detection probability of the received quantum states. SRM is an optimal measurement strategy that minimizes the error probability when distinguishing between a set of quantum states, making it advantageous for applications in QAM modulation. 
It involves constructing a set of POVM elements
derived from the square root of the Gram matrix associated with the received quantum states. The Gram matrix captures the inner products between the quantum states, and its square root defines the measurement operators. 

We first construct the Gram matrix $G$ with elements:
\begin{equation}
\begin{aligned}
 \label{eq:matrix_elements}
    &\langle \alpha _{pq}|\alpha _{p^{\prime}q^{\prime}}\rangle
    =\langle \alpha \big( p+qi \big) |\alpha \big( p^{\prime}+q^{\prime}i \big) \rangle \\
    &=\!\exp\! \Big\{ \!-\!\frac{1}{2}\alpha ^2\big[ \big( p^{\prime}\!-\!p \big) ^2\!+\!\big( q^{\prime}\!-\!q \big) ^2\!-\!2i\big( p^{\prime}p\!-\!q^{\prime}q \big) \big] \Big\}, 
    \end{aligned}
    \end{equation}
where $p^{\prime},q^{\prime}\in \Omega $ are the coordinates on the complex plane similar to $p$ and $q$. We use Lexicographic order to construct the Gram matrix based on these four elements, $p,q,p^{\prime}$, and $q^{\prime}$. 

Next, we perform eigenvalue decomposition on the Gram matrix $G$, as given by
\begin{equation}
 \label{eq:eigenvalue}
    G=V\varLambda _GV^*=\sum_{i=1}^L{{\sigma _i}^2|v_i\rangle \langle v_i|},
    \end{equation}
where ${\sigma _i}^2$ is the eigenvalue, and $|v_i\rangle $ is the eigenvector. 

Using the eigenvalue decomposition, we construct the SRM measurement operator, as given by
\begin{equation}
   M_y = G^{-1/2} |\alpha_y\rangle \langle \alpha_y| G^{-1/2},
    \end{equation}
where $G^{-1/2}$ is the inverse square root of the Gram matrix, and $M_y$ satisfy the completeness constraint, i.e.,\(\sum_y M_y = I\).

Finally, for the transmitted signal $X_i^{\text{Q}}, i=1,2,\cdots ,K$, the conditional probability of detecting the result $y$ is given by 
\begin{subequations}
\label{eq:conditional_probability}
\begin{flalign}
    P\big( \boldsymbol{X}^{\text{Q}} \big) 
    &\triangleq \mathrm{Pr}(Y=y|X_i^{\text{Q}})\\
    &= \langle \alpha_i | M_y | \alpha_i \rangle \\
    &= \big| \langle \alpha_y | G^{-1/2} | \alpha_i \rangle \big|^2.
    \end{flalign}
\end{subequations}

We further derive the entropy and conditional entropy of the received signals for the QAM modulated coherent states:
\begin{equation}
\label{entropy_QAM}
\begin{aligned}
&H\big(Y\big) =-\sum\limits_{y=0}^{+\infty}{\bigg\{{\sum_{i=1}^{m^K}\Pr\big(\boldsymbol{X}_i^{\text{Q}}\big )}\Pr\left(Y|\boldsymbol{X}_i^{\text{Q}}\right)\big.} \\
&\quad\quad\quad\quad\big. \times \log_2\big[\sum_{i=1}^{m^K}\Pr\big(\boldsymbol{X}_i^{\text{Q}}\big)\Pr\big(Y|\boldsymbol{X}_i^{\text{Q}}\big)\Big ]\bigg\},
\end{aligned}
\end{equation}
\begin{equation}
\label{conditional_entropy_QAM}
\begin{aligned}
&H\big(Y|X_1^{\text{Q}},\cdots,X_K^{\text{Q}}\big)=-\sum\limits_{y=0}^{+\infty}\sum_{i=1}^{m^K}\Pr\big(\boldsymbol{X}_i^{\text{Q}}\big)
\\
&
\quad\quad\quad\quad\times\Pr\big(Y|\boldsymbol{X}_i^{\text{Q}}\big) \log_2\Big[\Pr\big(Y|\boldsymbol{X}_i^{\text{Q}}\big)\Big],
\end{aligned}
\end{equation}
where $m$ is the modulation order, and $\boldsymbol{X}_i^{\text{Q}}, \forall i=1,\cdots, m^K$ enumerate all $m^K$ possibilities of $\boldsymbol{X}^{\text{Q}}=\left[ X_1^{\text{Q}},\cdots,X_K^{\text{Q}} \right] ^T$ with $X_k$ being the QAM signal of user $k, \forall k=1,\cdots,K$.

The mutual information between the input and output quantum states characterizes the transmission rate, 
and 
establishes that the uplink multi-user sum-rate \eqref{eq:SumRate_H} retains validity for QAM-modulated coherent states. By substituting \eqref{eq:conditional_probability}, \eqref{entropy_QAM}, and \eqref{conditional_entropy_QAM} into \eqref{eq:SumRate,2}, we obtain the uplink multi-user sum-rate of QAM-modulated coherent states as
\begin{align}\label{eq:SumRate_appendix_QAM}
I_{\mathrm{sum}}^{\text{Q}}\!=&K\log m \!\!+\! \frac{1}{m^K\ln 2}\sum_{y=0}^{+\infty} \bigg\{\sum_{i=1}^{m^K} \Big[ \big| \langle \alpha_y | G^{-1/2} | \alpha_i \rangle  \big|^2\times \nonumber\\
& \ln \big| \langle \alpha_y | G^{-1/2} | \alpha_i \rangle  \big|^2\Big] - \sum_{i=1}^{m^K} \big| \langle \alpha_y | G^{-1/2} | \alpha_i \rangle  \big|^2\times\nonumber\\
& \ln \Big[ \sum_{i=1}^{m^K} \big| \langle \alpha_y | G^{-1/2} | \alpha_i \rangle  \big|^2 \Big]
\bigg\}.
\end{align}
Since \(G\) is a positive definite matrix with its minimum singular value satisfying $\sigma_{\min}^2>0$, applying the Cauchy-Schwarz inequality yields  \(\left| \langle \alpha_y | G^{-1/2} | \alpha_i \rangle \right|^2 \leq \sigma_{\min}^{-2} \|\alpha_y\|^2 \|\alpha_i\|^2 \leq C e^{-\lambda y}\), where $C>0$ and $\lambda>0$. As \(y\to\infty\),
\(\left| \langle \alpha_y | G^{-1/2} | \alpha_i \rangle \right|^2 \) decays exponentially. Hence, both \(\ln \big[ \left| \langle \alpha_y | G^{-1/2} | \alpha_i \rangle \right|^2 \big]\) and \(\ln \big[ \sum_{i=1}^{m^K} \left| \langle \alpha_y | G^{-1/2} | \alpha_i \rangle \right|^2 \big]\) converge; and so does~\eqref{eq:SumRate_appendix_QAM}.

\bibliographystyle{IEEEtran}
\bibliography{myreferences}
\end{document}